\numberwithin{equation}{section}
\DeclareMathOperator{\tr}{Tr}
\newtheorem{theorem}{Theorem}
\newtheorem{lemma}{Lemma}
\newtheorem{corollary}{Corollary}
\theoremstyle{remark}
\newtheorem{remark}{Remark}
\theoremstyle{definition}
\newtheorem{assumption}{Assumption}
\newcommand{\dda}{{\rm d}}
\newcommand{\dd}{\,\dda}
\newcommand{\ddd}[1]{\dd^{#1}}
\newcommand{\lpsi}{{\lambda\psi}}
\begin{document}

\title{Bogolubov--Hartree--Fock theory for strongly interacting
  fermions in the low density limit\footnote{\copyright\ 2015 by the authors. This work may be
   reproduced, in its entirety, for non-commercial purposes.}
}

\author{Gerhard Br\"aunlich$^1$, Christian Hainzl$^2$, and Robert
  Seiringer$^3$
   \\ \\  $^1$Institute for Mathematics, Friedrich-Schiller-University Jena \\
Ernst-Abbe-Platz 2, 07743 Jena, Germany
  \\ \\  $^2$Mathematical Institute, University of T{\"u}bingen \\ Auf der Morgenstelle 10, 72076 T\"ubingen, Germany \\ \\
  $^3$Institute of Science and Technology Austria\\ Am Campus 1, 3400
  Klosterneuburg, Austria}

\date{November 25, 2015}

\maketitle

\begin{abstract}
We consider the Bogolubov--Hartree--Fock functional for a fermionic many-body system with two-body interactions. For suitable interaction potentials that have a strong enough attractive tail in order to allow for two-body bound states, but are otherwise sufficiently repulsive to guarantee stability of the system, we show that in the low-density limit the ground state of this model consists of a Bose--Einstein condensate of fermion pairs. The latter can be described by means of the Gross--Pitaevskii energy functional. 
\end{abstract}

\section{Introduction}
\label{sec:introduction}

We consider a gas of fermions confined in an external trap at zero temperature. The particles interact through a two-body potential $V$
which admits a negative energy bound state. At zero temperature and
low particle densities, this leads to the formation of
diatomic molecules forming a Bose-Einstein condensate.
It was realized in the '80s \cite{Leggett, NRS} that BCS theory
can be adequately applied to such types of tightly bound fermions. It
was pointed out in \cite{melo-randeria, zwerger-1992, Pieri-Strinati,
  randeria} that in the low density limit the macroscopic variations
in the pair density should be well captured by the Gross-Pitaevskii (GP)
equation.  From a mathematical point of view, the emergence of the GP
functional in the low density limit was recently proven in \cite{HS}
for the static case, and the dynamical case was subsequently treated in
\cite{HSch}.  The assumption, that the two-body interaction potential
allows for a bound state plays a crucial role.  In the case of weak
coupling where the potential is not strong enough to form a bound
state, the pairing mechanism may still play an important role for the
macroscopic  behavior of the system, but the separation of paired
particles can be much larger, in this case, than the average particle spacing.  In
fact this is the case in the usual BCS description of superconducting
materials. Close to the critical temperature the
macroscopic variation of the pairs is captured by the Ginzburg-Landau
equation in this case, as pointed out by Gorkov \cite{Gorkov} soon after the
introduction of BCS theory.  The first
mathematical proof of the emergence of Ginzburg-Landau theory from BCS
theory was recently given in \cite{FHSS-micro_ginzburg_landau}, which itself relied on earlier work on
the BCS functional \cite{HHSS,HS-T_C,FHNS2007}.

In the current paper our starting point is the full BCS Hartree-Fock 
functional. 
That is,  we include the direct and exchange 
terms in the interaction energy.  One also finds this functional under the name
Bogolubov-Hartree-Fock (BHF) functional in the literature. The inclusion
of the density-density interaction adds additional difficulties concerning stability of the system. It forces us to restrict  to
systems with a two-body potential $V$ that, on the one hand,  has an attractive tail deep enough to allows for
a bound state and, on the other hand, is sufficiently  repulsive at short distances to guarantee stability.   This is consistent with typically considered 
interaction potentials \cite{Leggett}.

\begin{figure}[ht]
  \centering
  \includegraphics[width=0.3\linewidth]{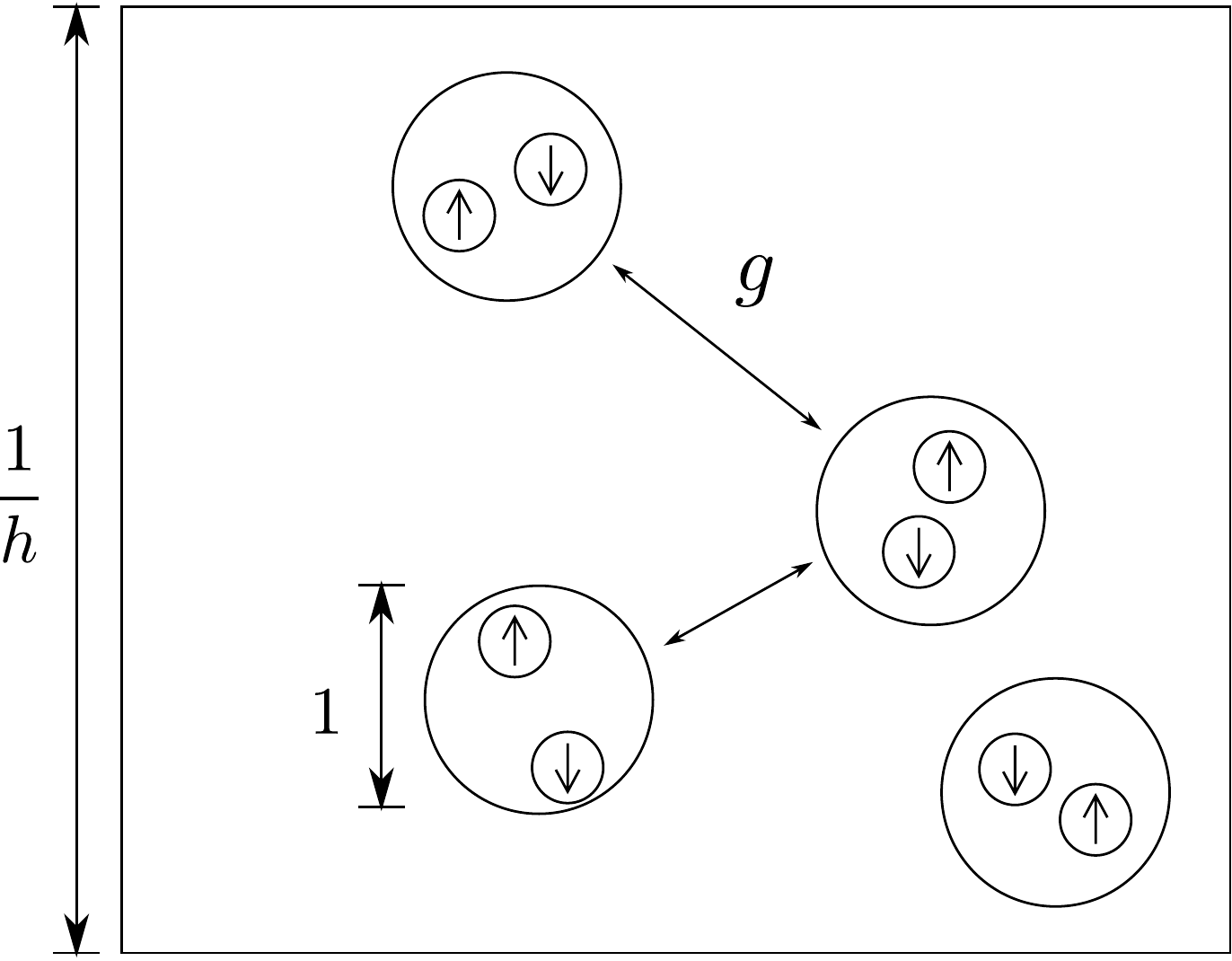}
  \caption{Fermions form diatomic molecules with their 
    repulsive interaction represented by an effective scattering
    length $g$.}
  \label{fig:bosons}
\end{figure}

We shall investigate the ground state energy of the BHF functional in the low density limit.
We introduce a small parameter $h$  playing the role of the inverse particle number, i.e., $N = h^{-1}$. 
We consider external
potentials that  confine the particles on a length scale of order $h^{-1}$, while the range of the interaction among the particles 
is of order one.
This implies that the density of particles is of the order $h^2$.
Hence the small parameter $h$ represents the square root of the
density as well the ratio between the microscopic and macroscopic length scale.
We are going to show that in
the low density limit the fermions group together in pairs, such that
the leading order in the energy is given by the number of pairs,
$1/(2h)$, times the binding energy of a pair, $-E_b$. The next to leading order
is given by the energy of a repulsive Bose gas, consisting of fermion pairs, in a trap, and can be described in terms of  the Gross-Pitaevskii energy  functional.
 More precisely, if $E^{\rm BHF}(h) $
denotes the BHF energy of $1/h$ fermions
we shall obtain
$$ E^{\rm BHF}(h) = - E_b  \frac{1}{2h} + \frac h 2 E^{\rm GP}(g) + O(h^{3/2})$$
for small $h$, where $E^{\rm GP}(g)$
denotes the Gross-Pitaevskii energy with appropriate interaction parameter $g$, which can be computed in terms of the microscopic quantities. The prefactor $h/2$ should be interpreted as $N_{\rm bos}/L^2$, where $N_{\rm bos} = N/2 = 1/(2h)$ is the number of fermion pairs and $L=1/h$ is the macroscopic length scale. 

We will also give a detailed description of the corresponding ground state of the BHF functional. 
Its minimizer turns out to be given, to leading order in $h$, in the form of the two particle wavefunction
$$\alpha(x,y) = h\, \alpha_0\left ({x-y}\right ) \psi\left(h \frac {x+y}2 \right),$$
where $\alpha_0$ is the  ground state of a bound fermion pair with energy $-E_b$,
and $\psi$ solves the GP equation and describes the density
fluctuations of the pairs.

Our work is an extension of \cite{HS} in two directions. First, we
include exchange and direct terms in the energy functional. Second, we avoid working with infinite, periodic systems, which allows us to significantly simplify the proof and also to improve the error bounds, utilizing ideas in \cite{HSch}. In particular, we do not need to use here the rather involved semiclassical estimates of 
\cite{FHSS-micro_ginzburg_landau}.

Our work presents the first proof of the occurrence of pairing
in the ground state of a non-translation invariant
Bogolubov-Hartree-Fock system. (For a translation invariant system this was previously shown in
\cite{BHS}.) The ground
state properties of the BHF functional, in the
context of Newtonian interaction, were studied in
\cite{Lenzmann-Lewin}, see also \cite{BacFroJon-09}. Still it could
not be shown that the fermions in the ground state exhibit pairing. Its occurrence  was  only shown  numerically in \cite{Lewin-Paul}.
In the low density limit, which we are studying here, the
ground state actually predominately consists of pairs, in a sense to be made precise below. In particular, it is essential for our results that the pairing term is included in the energy functional; the Hartree-Fock functional for particle-number conserving states would lead to markedly different results, and is inappropriate for the description of low density gases. 

\section{Main Results}
\label{sec:results}

As in BCS theory, the state of a fermionic system is described by a
self-adjoint operator $\Gamma \in
\mathcal{L}\big(L^2(\mathbb{R}^3)\oplus L^2(\mathbb{R}^3)\big)$,
satisfying $0\leq \Gamma \leq 1$. It is
determined by two operators $\gamma, \alpha \in
\mathcal{L}\big(L^2(\mathbb{R}^3)\big)$ and has the form
\begin{equation*}
  \Gamma =
  \left(
    \begin{matrix}
      \gamma & \alpha \\
      \overline{\alpha} & 1-\overline{\gamma}
    \end{matrix}
  \right),
\end{equation*}
where $0\leq \gamma \leq 1$ is trace class and $\alpha$ is Hilbert-Schmidt and symmetric, i.e. $\alpha(x,y) = \alpha(y,x)$, which implies that $\alpha^* = \bar \alpha$.
We denote by $\overline{\gamma}$, $\overline{\alpha}$ the
operators with kernels $\overline{\gamma(x,y)}$ and
$\overline{\alpha(x,y)}$, respectively. We note that we do not include spin variables here, but rather assume $SU(2)$-invariance of the states \cite{HS-review}.  The full, spin-dependent Cooper-pair wave function is the product of $\alpha$ with an anti-symmetric spin singlet. 
Since $\alpha$ is symmetric, the latter is thus anti-symmetric, as appropriate for fermions.

Given an external potential $W$
and a two-particle interaction potential $V$, the corresponding Bogolubov-Hartree-Fock functional (BHF)  is given by
\begin{equation}
  \label{eq:E_BCS:micro}
  \begin{split}
    \mathcal{E}^{\rm BHF}(\Gamma) &= \tr \big( -\Delta +
    W\big)\gamma + \frac{1}{2} \int_{\mathbb{R}^6} V(x-y)
    |\alpha(x,y)|^2 \ddd{3} x \ddd{3}y\\
    &\quad - \frac{1}{2}\int_{\mathbb{R}^6}|\gamma(x,y)|^2
    V(x-y)\ddd{3}x \ddd{3}y + \int_{\mathbb{R}^6}
    \gamma(x,x)\gamma(y,y) V(x-y)\ddd{3}x \ddd{3}y.
  \end{split}
\end{equation}
We note that the terms in the first line represent the BCS functional, while the last line contains the additional exchange and direct terms in the interaction energy. 
A formal derivation of this functional from quantum
mechanics can be obtained via restriction to quasi-free states, see
\cite{BLS}, \cite[Appendix]{HHSS} or  \cite{HS-review}.
Let us mention that our methods also allow to include a magnetic external vector potential, but for simplicity we shall not 
do so here.

We study a system of $h^{-1}$ fermions interacting by means of a two-body
interaction $V=V(x-y)$, confined in an external  potential of the form $W(hx)$. I.e., the external potential varies on a scale of order $1/h$
whereas $V$ varies on a scale of order one.   Since the trap $W$ confines the particles within
a volume of order $1/h^3$,  the particle density  is of the order 
$h^2$.  Hence the  limit of small $h$
corresponds to a dilute or low density limit.

Since we expect the interaction energy per particle pair to be of the order of the density, we  shall also consider suitably weak external
potentials, i.e., we replace $W$ by $h^2W$. It is convenient
to use macroscopic variables instead of microscopic ones, i.e., we
define $x_h= hx$, $y_h= hy$, $\alpha_h(x,y) =
h^{-3}\alpha(\frac{x}{h},\frac{y}{h})$, and $\gamma_h(x,y) =
h^{-3}\gamma(\frac{x}{h},\frac{y}{h})$.
The resulting BHF functional is then given by (now dropping the subscripts $h$)
\begin{equation}
  \label{eq:E_BCS}
  \begin{split}
    \mathcal{E}^{\rm BHF}(\Gamma) &= \tr (-h^2 \Delta + h^2W)\gamma +
    \frac{1}{2} \int_{\mathbb{R}^6} V\Bigl(\frac{x-y}{h}\Bigr)
    |\alpha(x,y)|^2 \ddd{3} x \ddd{3}y\\
    &\quad - \frac{1}{2}\int_{\mathbb{R}^6}|\gamma(x,y)|^2
    V\Bigl(\frac{x-y}{h}\Bigr)\ddd{3}x \ddd{3}y + \int_{\mathbb{R}^6}
    \gamma(x,x)\gamma(y,y) V\Bigl(\frac{x-y}{h}\Bigr)\ddd{3}x
    \ddd{3}y,
  \end{split}
\end{equation}
where $W=W(x)$ is independent of $h$. The corresponding ground state energy is denoted as 
\begin{equation}\label{ebhf} 
  E^{\rm BHF}(h) = \inf \{\mathcal{E}^{\rm BHF}(\Gamma) \, | \, 0 \leq \Gamma \leq 1, \, \tr \gamma = 1/h\}\,.
\end{equation}

\begin{figure}[ht]
  \centering
  \includegraphics[width=0.2\linewidth]{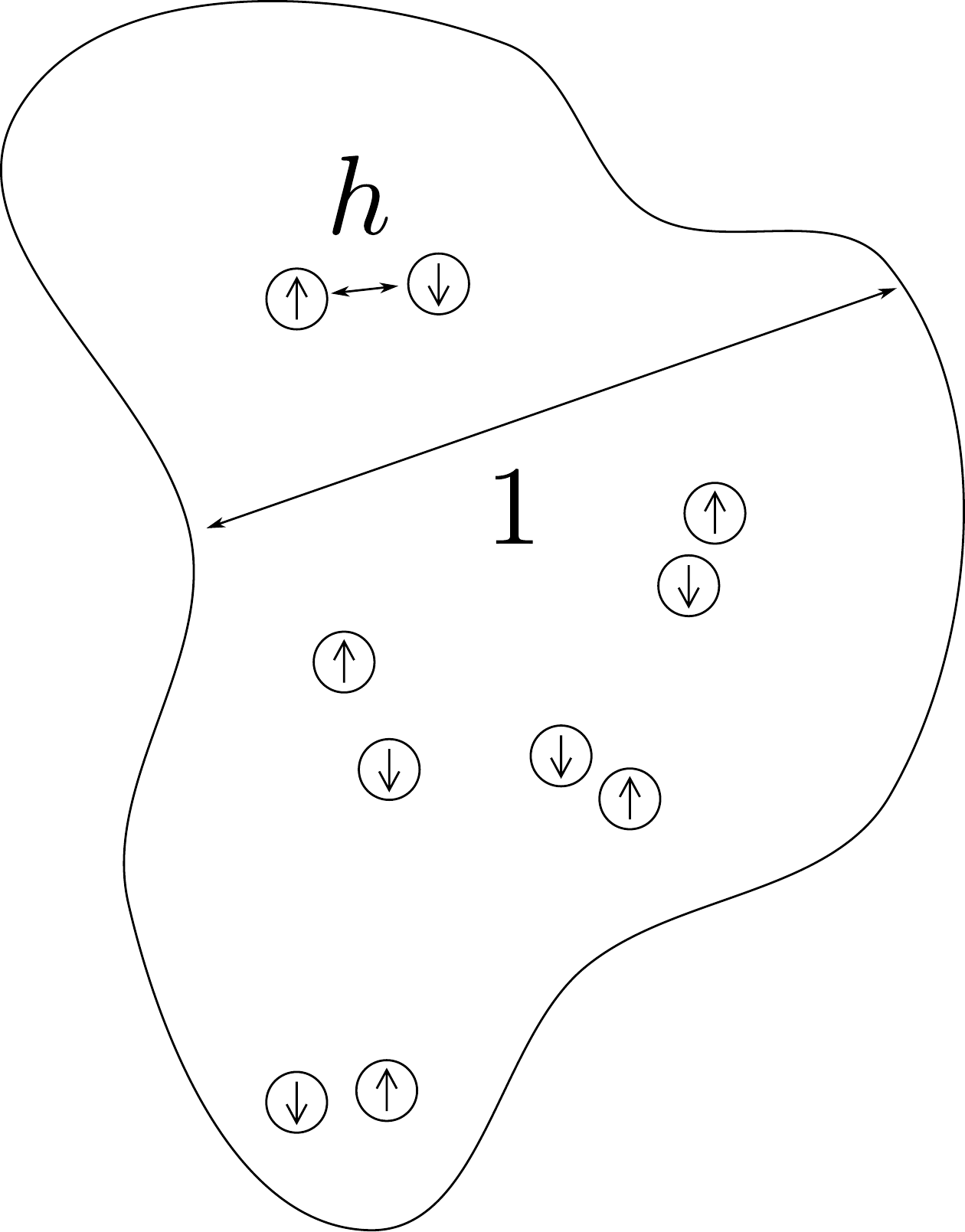}
  \caption{Separation of scales: The range of the interaction between the fermions
    is of order $h$, while the external potential varies on a scale of
    order $1$.}
  \label{fig:system}
\end{figure}

For $\psi \in H^1(\mathbb{R}^3)$ the GP
functional is defined as
\begin{equation}
  \label{eq:E_GP}
  \mathcal{E}^{\rm GP}(\psi)
  =
  \int_{\mathbb{R}^3}\left( \frac{1}{2} |\nabla\psi(x)|^2 + 2
    W(x)|\psi(x)|^2 + g|\psi(x)|^4 \right) \ddd{3}x\,.
\end{equation}
The factors $1/2$ and $2$, respectively, in the first two terms result from the fact that \eqref{eq:E_GP} describes fermion {\em pairs}. 
The  interaction parameter $g>0$ will be determined by the BHF functional and
represents the interaction strength among different pairs.     We
denote the ground state energy of the GP functional as
\begin{equation}
  E^{\rm GP}(g) = \inf \{  \mathcal{E}^{\rm GP}(\psi) \, | \, \psi \in H^1(\mathbb{R}^3), \, \|\psi\|_2^2 = 1\}.
\end{equation}

We shall consider the minimization problem \eqref{ebhf}  and show
that its value in the limit $h\to 0$ is to leading order given by the
binding energy of the fermion pairs, i.e. $-E_b \frac 1{2h}$.
This assumes, of course, that  the
two-body interaction potential $V$ allows for  a negative energy bound state, which is 
part of the following assumption.

\begin{assumption}
  \label{asm:V}
  Let $V\in  L^\infty(\mathbb{R}^3)$ be real-valued, with $V(x)
  = V(-x)$, such that $|\, \cdot \,| V(\,\cdot\,) \in L^1(\mathbb{R}^3)$ and $-2\Delta +V$ has a normalized ground state $\alpha_0$
   with corresponding ground state energy
  $-E_b < 0$.
\end{assumption}

Including direct and exchange term into the BCS functional gives rise
to a new problem. A priori it is not clear whether the functional guarantees stability of the second kind.
To ensure it  we impose
the following further assumption on $V$.
\begin{assumption}
  \label{asm:V:dir_ex}
  There is $U\in L^1(\mathbb{R}^3)\cap L^\infty(\mathbb{R}^3)$, with
  non-negative Fourier transform
  $\widehat{U} \geq 0$, such that $V - \frac 12 V_+ \geq U$. Here $V_+
  = \frac{1}{2}(|V|+V)$ denotes the positive part of $V$.
\end{assumption}
In other words,  we consider potentials which, after cutting its
positive part in half, can be bounded from below
by functions with a non-negative Fourier transform. In particular, this means that
the potentials have to  have a strong enough
repulsive core and a relatively small attractive tail, which still has to be large enough to
allow for bound states.

\begin{remark}
  The following construction shows that it is easy to find potentials
  $V$ with the desired properties of Assumptions~\ref{asm:V} and~\ref{asm:V:dir_ex}:
  Choose a potential $U$ which is strictly negative on an open set
  $\Omega \subset \mathbb{R}^3$, such that $\widehat{U} \geq 0$.  The
  latter property can be ensured, e.g., by taking $U$ to be the convolution of some function $u$ with its reflection $u(-\,\cdot\,)$.   Now set $V =
  2U_+ - U_-$. Obviously this $V$ fulfills
  Assumption~\ref{asm:V:dir_ex}.  Finally, scale $V$ according to $V
  \mapsto \lambda V$ until the negative part is deep enough for a bound state to appear. 
  \end{remark}

With these assumptions we are ready to formulate our main theorem.

\begin{theorem}
  \label{thm:GP}
  Let $W\in L^\infty(\mathbb{R}^3)$ be real-valued.  Under
  Assumptions~\ref{asm:V} and \ref{asm:V:dir_ex}, we have for small
  $h$,
  \begin{equation}
    \label{eq:thm:GP}
    E^{\rm BHF}(h)
    = -
    {E_b} \frac 1 {2h}
    + \frac h 2 E^{\rm GP}(g) + O(h^{3/2}),
  \end{equation}
  where $g>0$ is given by
  \begin{equation*}
    g = (2\pi)^3 \int_{\mathbb{R}^3} |\widehat{\alpha}_0(p)|^4(2p^2+E_b)\ddd{3}p
    - \int_{\mathbb{R}^3} |(\overline{\alpha_0} *
    \alpha_0)(x)|^2\,V(x) \ddd{3}x
    + 2\int_{\mathbb{R}^3} V(x) \ddd{3}x.
  \end{equation*}
  Moreover, if $\Gamma$ is an approximate minimizer of
  $\mathcal{E}^{\rm BHF}$, in the sense that
  \begin{equation}\label{apprmin}
    \mathcal{E}^{\rm BHF}(\Gamma)
    \leq -{E_b}\frac{1}{2h} + \frac h 2\left(E^{\rm GP}(g) + \epsilon\right)
  \end{equation}
  for some $\epsilon > 0$, then the corresponding $\alpha$ can be
  decomposed as
  \begin{equation}
    \label{eq:alpha_psi:0}
    \alpha = \alpha_\psi + \xi, \qquad \|\xi\|^2_2 \leq O(h),
    \qquad \|\alpha\|^2_2 = O(h^{-1}),
  \end{equation}
  where
  \begin{equation}
    \label{eq:alpha_psi}
    \begin{split}
      \alpha_\psi(x,y) &= h^{-2}
      \psi\Bigl(\frac{x+y}{2}\Bigr)\alpha_0\Bigl(\frac{x-y}{h}\Bigr),
    \end{split}
  \end{equation}
  and $\psi$ is an approximate minimizer of $\mathcal{E}^{\rm GP}$ in
  the sense that
  \begin{equation}
    \label{eq:approx_minimizer}
    \mathcal{E}^{\rm GP}(\psi)
    \leq E^{\rm GP}(g) + \epsilon + O(h^{1/2}).
  \end{equation}
\end{theorem}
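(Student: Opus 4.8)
\emph{Overall plan.} I would prove the energy expansion \eqref{eq:thm:GP} by establishing matching upper and lower bounds, and then read off the structure of approximate minimizers from the lower bound argument. The organizing device throughout is to pass to center-of-mass and relative coordinates $S=(x+y)/2$, $z=x-y$, in which $-\Delta_x-\Delta_y=-\tfrac12\Delta_S-2\Delta_z$, and to exploit that, by Assumption~\ref{asm:V} and \eqref{eq:alpha_psi},
\[
\big(-2h^2\Delta_z+V(z/h)+E_b\big)\,\alpha_\psi=0 .
\]

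\emph{Upper bound.} I would take $\psi$ a smooth, compactly supported near-minimizer of $\mathcal E^{\rm GP}$, set $\alpha=\alpha_\psi$, and define $\gamma=\tfrac12\big(1-\sqrt{1-4\alpha_\psi\alpha_\psi^*}\big)$, so that $\Gamma$ is a projection; a Schur-test estimate $\|\alpha_\psi\|_{\rm op}=O(h)$ (using $\alpha_0\in L^1$, which follows from its exponential decay) makes this legitimate, and rescaling $\psi$ by $1+O(h^2)$ enforces $\tr\gamma=1/h$. Writing $\gamma=\alpha_\psi\alpha_\psi^*+\delta$ with $\delta=(\alpha_\psi\alpha_\psi^*)^2+O((\alpha_\psi\alpha_\psi^*)^3)$, the kinetic plus pairing energy of the $\alpha_\psi\alpha_\psi^*$-part becomes $\tfrac{h^2}{4}\|\nabla_S\alpha_\psi\|_2^2+\tfrac12\langle\alpha_\psi,(-2h^2\Delta_z+V(z/h))\alpha_\psi\rangle$, which by the displayed identity equals $\tfrac h4\|\nabla\psi\|_2^2-\tfrac{E_b}{2}\|\alpha_\psi\|_2^2+O(h^2)=-\tfrac{E_b}{2h}+\tfrac h2\cdot\tfrac12\|\nabla\psi\|_2^2+O(h^2)$; the trap contributes $h^2\tr(W)\gamma=\tfrac h2\cdot2\!\int W|\psi|^2+O(h^2)$ (slow variation of $\psi$, with the first-order correction vanishing because $\alpha_0$ is even); and the quartic term of $\mathcal E^{\rm GP}$ is produced by the three remaining pieces, $\tr[(-h^2\Delta+\tfrac{E_b}2)\delta]$, the exchange term, and the direct term, which evaluate, respectively, to $\tfrac h2(2\pi)^3\|\psi\|_4^4\int|\widehat\alpha_0|^4(2p^2+E_b)$, $-\tfrac h2\|\psi\|_4^4\int|\overline{\alpha_0}*\alpha_0|^2V$ and $\tfrac h2\|\psi\|_4^4\cdot2\!\int V$, each up to $O(h^{3/2})$, using that $V(\cdot/h)$ is concentrated at scale $h$ and that there $\gamma(x,y)\approx h^{-1}|\psi(\tfrac{x+y}2)|^2(\overline{\alpha_0}*\alpha_0)(\tfrac{x-y}h)$. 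Summing gives $\mathcal E^{\rm BHF}(\Gamma)=-\tfrac{E_b}{2h}+\tfrac h2\mathcal E^{\rm GP}(\psi)+O(h^{3/2})$.

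\emph{Lower bound and a priori estimates.} For an arbitrary admissible $\Gamma$ I would first use Assumption~\ref{asm:V:dir_ex} for stability: from positivity of $\gamma$ one has $|\gamma(x,y)|^2\le\gamma(x,x)\gamma(y,y)$, hence $-\tfrac12\int|\gamma(x,y)|^2V((x-y)/h)\ge-\tfrac12\int\gamma(x,x)\gamma(y,y)V_+((x-y)/h)$, so the direct plus exchange energy is $\ge\int\gamma(x,x)\gamma(y,y)(V-\tfrac12V_+)((x-y)/h)\ge\int\gamma(x,x)\gamma(y,y)\,U((x-y)/h)\ge0$ by $\widehat U\ge0$. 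Dropping it, and using $\alpha\alpha^*\le\gamma-\gamma^2\le\gamma$ so $\|\alpha\|_2^2\le1/h$, the same coordinate reorganization for general $\alpha$ together with $-2\Delta_z+V(z/h)\ge-E_b$ and $\tr\gamma=1/h$ gives $\mathcal E^{\rm BHF}(\Gamma)\ge-\tfrac{E_b}{2h}-O(h)$. Comparing with \eqref{apprmin} (or with the upper bound, for the energy statement) forces each dropped nonnegative quantity to be $O(h)$: the center-of-mass kinetic energy $h^2\|\nabla_S\alpha\|_2^2$, the ``relative'' energy $\langle\alpha,(-2h^2\Delta_z+V(z/h)+E_b)\alpha\rangle$, the correction $\tr[(-h^2\Delta+\tfrac{E_b}2)(\gamma-\alpha\alpha^*)]$, the direct plus exchange energy, and $\tfrac{E_b}2(1/h-\|\alpha\|_2^2)$.

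\emph{Extraction of the GP functional and structure of minimizers.} I would then set $\alpha=\alpha_\psi+\xi$ with $\xi(S,\cdot)\perp\alpha_0(\cdot/h)$ for a.e.\ $S$ (so $\xi\perp\alpha_\psi$). Since $-2\Delta+V$ has, by the Perron--Frobenius property, a nondegenerate ground state, there is a gap $\kappa>0$ with $-2\Delta_z+V(z/h)+E_b\ge\kappa\,P_{\{\alpha_0(\cdot/h)\}^\perp}$; hence the relative energy above is $\ge\kappa\|\xi\|_2^2$, giving $\|\xi\|_2^2=O(h)$ and $\|\alpha\|_2^2=O(h^{-1})$. Using the displayed identity (so that $\alpha_\psi$ drops out of the relative energy by self-adjointness), $\tr\gamma=\|\alpha\|_2^2+\tr(\gamma-\alpha\alpha^*)=1/h$, the operator inequality $\gamma-\alpha\alpha^*\ge(\alpha\alpha^*)^2(1-2\alpha\alpha^*)^{-1}$ (which follows from $\gamma-\gamma^2\ge\alpha\alpha^*$), and the vanishing of the $\nabla_S$-cross term (again because $\xi(S,\cdot)\perp\alpha_0(\cdot/h)$), I would re-run the term-by-term analysis of the upper bound \emph{as lower bounds}, discarding the nonnegative $\xi$-term, to obtain $\mathcal E^{\rm BHF}(\Gamma)\ge-\tfrac{E_b}{2h}+\tfrac h2\mathcal E^{\rm GP}(\psi)-O(h^{3/2})$. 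Since $\tr\gamma=1/h$ forces $\|\psi\|_2^2=1-O(h)$, normalizing $\psi$ costs only $O(h)$ in $\mathcal E^{\rm GP}$, which both completes the lower bound $E^{\rm BHF}(h)\ge-\tfrac{E_b}{2h}+\tfrac h2E^{\rm GP}(g)-O(h^{3/2})$ and, upon comparison with \eqref{apprmin}, yields $\mathcal E^{\rm GP}(\psi)\le E^{\rm GP}(g)+\epsilon+O(h^{1/2})$, as claimed in \eqref{eq:approx_minimizer}.

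\emph{Main obstacle.} The hard part is the precise treatment of the direct and exchange terms in the lower bound: one must show $D+X\ge\tfrac h2\|\psi\|_4^4\big(2\!\int V-\int|\overline{\alpha_0}*\alpha_0|^2V\big)-O(h^{3/2})$, whereas Assumption~\ref{asm:V:dir_ex} only yields $D+X\ge0$ directly. This requires replacing $\gamma$ by its effective kernel to sufficient accuracy — controlling $\gamma-\alpha\alpha^*$, $\alpha\alpha^*-\alpha_\psi\alpha_\psi^*$, and the slow variation of $\psi$ when integrated against $V(\cdot/h)$ — where the scale-$h$ localization of $V(\cdot/h)$, quantified by $\|\,|\cdot|\,V\|_1<\infty$, is what makes these errors small; but since the $\psi$ extracted in the lower bound is a priori only in $H^1$, it must first be regularized at a suitable power of $h$, and this regularization, together with the residual slow-variation estimates, is what limits the accuracy to $O(h^{3/2})$. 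A secondary technical point is justifying the non-commutative form of $\gamma-\alpha\alpha^*\ge(\alpha\alpha^*)^2$ needed to produce the first term of $g$, which I expect to handle by standard operator-monotonicity estimates; the rest — the coordinate reorganization, the role of the gap of $-2\Delta+V+E_b$, and the bookkeeping of remainders — should be routine once these inputs are in place.
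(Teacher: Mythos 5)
Your overall architecture is the paper's: the same pair wave function $\alpha_\psi$, the same stability argument from Assumption~\ref{asm:V:dir_ex}, the same a priori bounds obtained by dropping the (nonnegative) direct-plus-exchange energy and using the spectral gap of $-2\Delta+V$ above $-E_b$, the same extraction of $\psi$ by projecting $\alpha$ onto $\alpha_0(\cdot/h)$, and the same identification $g=g_{\rm bcs}+g_{\rm ex}+g_{\rm dir}$. The genuine gap is the step on which your lower bound for the quartic (BCS) term rests: the claimed operator inequality $\gamma-\alpha\overline{\alpha}\geq(\alpha\overline{\alpha})^2(1-2\alpha\overline{\alpha})^{-1}$, asserted to ``follow from $\gamma-\gamma^2\geq\alpha\overline{\alpha}$'' by standard operator monotonicity. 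The map $x\mapsto x^2$ is not operator monotone, and $\gamma$ need not commute with $\alpha\overline{\alpha}$, so the scalar computation (Catalan-series vs.\ geometric series) does not transfer; you give no proof, and it is exactly the kind of non-commutative statement that can fail or at least requires a real argument. The paper avoids it entirely: it uses the exact algebraic identity $\gamma=\alpha\overline{\alpha}+\alpha\overline{\alpha}\alpha\overline{\alpha}+(\gamma-\alpha\overline{\alpha}-\gamma^2)+(\gamma-\alpha\overline{\alpha})^2+\alpha\overline{\alpha}(\gamma-\alpha\overline{\alpha})+(\gamma-\alpha\overline{\alpha})\alpha\overline{\alpha}$, drops the two manifestly nonnegative pieces under $H=-h^2\Delta+E_b/2\geq0$, and bounds the non-sign-definite cross terms by $O(h^{3/2})$ using the a priori bound \eqref{eq:apriori:Delta_gamma_2}, as in \eqref{eq:lower_bound:H:R1}. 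The fact that even the paper only obtains the needed statement in trace form against $H$ and only up to $O(h^{3/2})$ cross terms should make you suspicious of a clean operator inequality; at minimum your proposal is incomplete at this point, and you should either prove the inequality or switch to the decomposition argument.

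Two secondary points. First, in the upper bound you take a projection, $\gamma=\tfrac12\bigl(1-\sqrt{1-4\alpha_\psi\overline{\alpha_\psi}}\,\bigr)$, with smooth compactly supported near-minimizers of $\mathcal{E}^{\rm GP}$; this is workable (the paper remarks the projection choice is natural but deliberately uses the polynomial ansatz \eqref{eq:Gamma_trial} to avoid expanding the square root), but note that a GP minimizer need not exist for mere $W\in L^\infty$, so to get the stated rate $O(h^{3/2})$ relative to $E^{\rm GP}(g)$ you must control how the constants in your error terms (which involve $\|\psi\|_\infty$ and the support) grow as $\mathcal{E}^{\rm GP}(\psi)\to E^{\rm GP}(g)$; the paper sidesteps this by proving \eqref{eq:BHFequGP} for arbitrary $H^1$-normalized $\psi$ with errors depending only on $\|\psi\|_{H^1}$, then taking the infimum. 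Second, your diagnosis that the lower bound requires regularizing the extracted $\psi$ and that this regularization is what limits the accuracy to $O(h^{3/2})$ is off: no regularization is used in the paper, since every error term (via H\"older and Sobolev) is controlled by $\|\psi\|_2$ and $\|\nabla\psi\|_2$ alone, which are bounded a priori; the $h^{3/2}$ arises from the kinetic cross terms \eqref{eq:lower_bound:H:R1} and from the mixed $\xi$--$\alpha_\psi$ contributions to the direct term, which is where the actual bookkeeping effort lies.
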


\begin{remark}\label{gis}
  In contrast to the case of the  usual BCS functional \cite{HS,HSch},  where the coupling constant $g$ only consists only of the
  BCS term 
  \begin{equation}\label{def:gbcs}
  g_{\rm bcs} = (2\pi)^3 \int_{\mathbb{R}^3}
  |\widehat{\alpha}_0(p)|^4(2p^2+E_b)\ddd{3}p\,,
  \end{equation}
  it receives here two
  additional contributions from the direct and exchange energies, 
  \begin{equation*}
    g_{\rm dir} = 2 \int_{\mathbb{R}^3} V(x) \ddd{3}x\quad\textrm{and }\quad
    g_{\rm ex} = - \int_{\mathbb{R}^3} |({\alpha_0} * \alpha_0)(x)|^2\,V(x) \ddd{3}x\,,
  \end{equation*}
  respectively. It is easy to see that our Assumption~\ref{asm:V:dir_ex} implies that $g_{\rm dir} + g_{\rm ex} \geq 0$, hence $g>0$. 
  \end{remark}

\begin{remark}
  The proof of Thm.~\ref{thm:GP} partly relies on ideas in \cite{HSch}, where the corresponding time-dependent problem was studied for the BCS functional, i.e., in the absence of direct and exchange term.
  A similar result can also be shown to hold  in the case of the
  time-dependent BHF equation, which in a different context  was
  studied in \cite{HaiLenLewSch-10}.  By following the strategy of
  \cite{HSch} and handling the exchange and direct terms in a similar way as done here, one can derive 
   the time-dependent GP equation with interaction 
  parameter $g$.
\end{remark}

{\em Notation}: In the following we often write $a \lesssim b$ to denote $a \leq Cb$ for some generic constant $C>0$. 

\section{Stability}
\label{sec:stability}

Before giving a sketch of the proof of Theorem~\ref{thm:GP} we show how
Assumption~\ref{asm:V:dir_ex} gives rise to stability of the second kind. In fact we simply show
that the assumption guarantees that the sum of the direct and exchange terms is 
non-negative.  To this aim we first consider the exchange term and
estimate
\begin{align*}
  - \int_{\mathbb{R}^6}|\gamma(x,y)|^2 V\bigl((x-y)/h\bigr)\ddd{3}x
  \ddd{3}y &\geq -\int_{\mathbb{R}^6}|\gamma(x,y)|^2
  V_+\bigl((x-y)/h\bigr)\ddd{3}x \ddd{3}y \\
  & \geq - \int_{\mathbb{R}^6}\gamma(x,x)\gamma(y,y)
  V_+\bigl((x-y)/h\bigr)\ddd{3}x \ddd{3}y\,,
\end{align*}
using $|\gamma(x,y)|^2 \leq \gamma(x,x)\gamma(y,y)$.  Hence we
have for the sum of direct and exchange term
\begin{align*}
  2\int_{\mathbb{R}^6} \gamma&(x,x)\gamma(y,y)
  V\bigl((x-y)/h\bigr)\ddd{3}x \ddd{3}y -
  \int_{\mathbb{R}^6}|\gamma(x,y)|^2
  V\bigl((x-y)/h\bigr)\ddd{3}x \ddd{3}y\\
  &\geq 2\int_{\mathbb{R}^6} \gamma(x,x)\gamma(y,y)
  \left (V- V_+/2\right)\bigl((x-y)/h\bigr)\ddd{3}x \ddd{3}y \\
  &\geq 2 \int_{\mathbb{R}^6}  \gamma(x,x)\gamma(y,y) U\bigl( (x-y)/h \bigr)  \ddd{3}x \ddd{3}y
\end{align*}
where we used the assumption $ \left (V- \frac 12 V_+\right) \geq U$. Since $\widehat{U} \geq
0$ the last term is non-negative. Hence the question of stability is reduced to the corresponding problem for the BCS functional, and is easily seen to hold under our assumptions on $V$.

\section{Sketch of the proof of Theorem~\ref{thm:GP}}
\label{sec:proofs}

The proof of \eqref{eq:thm:GP} consists of deriving appropriate upper and lower bounds on $E^{\rm BHF}(h)$. 

\subsection{Upper bound}

For the upper bound one has to construct  a suitable trial state. We shall proceed similarly to \cite{HSch} and define
the trial state $\Gamma_\psi$ via the pair wavefunction
\begin{equation}\label{def:alphapsi}
\alpha_\psi(x,y) = h^{-2} \psi\Bigl(\frac{x+y}{2}\Bigr)\alpha_0\Bigl(\frac{x-y}{h}\Bigr)\,.
\end{equation}
Since we expect that the system in its ground state energy consists
predominantly of pairs we define the one particle density
$\gamma_\psi$ such that to leading order it equals $\overline
{\alpha_\psi} \alpha_\psi$. 
More precisely, we choose the
trial state
\begin{equation}\label{eq:Gamma_trial0}
\Gamma_\psi = \left(
    \begin{matrix}
      \gamma_\psi & \alpha_\psi \\
      \overline{\alpha_\psi} & 1-\overline{\gamma_\psi}
    \end{matrix}\right)
 \end{equation}
    such that
  \begin{equation}
    \label{eq:Gamma_trial}
      \gamma_\psi = \alpha_\psi\overline{\alpha_\psi} +
      (1+ h^{1/2})\alpha_\psi\overline{\alpha_\psi}\alpha_\psi\overline{\alpha_\psi} \,.
      \end{equation}
 The function $\psi$ here is only approximately normalized, i.e., $\|\psi\|_2 = 1 + O(h^2)$, to ensure that $ \tr \gamma_\psi = 1/h$. 
   We will see below that for small enough $h$ the definition \eqref{eq:Gamma_trial}
  guarantees that $0\leq \Gamma_\psi \leq 1$.

  In the limit of small $h$ the GP energy functional  emerges from the BHF functional $\mathcal{E}^{\rm
    BHF}(\Gamma_\psi)$  as follows.
  If we  consider  the kinetic energy term plus the pairing term   and subtract  the total binding
  energy,
  $-\frac{E_b}{2} \frac{1}{h} =
  -\frac{E_b}{2}\tr{\gamma_\psi}$, the contribution to 
  \begin{equation}
 \tr
    \left(-h^2 \Delta + E_b/2\right)\gamma_\psi +
    \frac{1}{2}\int_{\mathbb{R}^6}V\Bigl(\frac{x-y}{h}\Bigr)
    |\alpha_\psi(x,y)|^2 \ddd{3}x \ddd{3}y
  \end{equation}
  coming from the  $\alpha_\psi \overline{\alpha_\psi}$ term in
  \eqref{eq:Gamma_trial} can be written as
  \begin{equation*}
    \int_{\mathbb{R}^3} \left \langle \alpha_\psi(\cdot, y), \left[-h^2\Delta_x
        +
        \frac{1}{2}V\Bigl(\frac{\cdot -y}{h}\Bigr)
        +\frac{E_b}{2}\right] \alpha_\psi(\cdot ,y)\right \rangle
    \ddd{3}y.
  \end{equation*}
  Since $\alpha_\psi(x,y)$ is symmetric we can replace $\Delta_x $ by
  $\frac 12 (\Delta_x + \Delta_y)$.  In terms of center of mass
  $X=(x+y)/2$ and relative coordinates $r = x-y$ the kinetic energy has the form
  $\Delta_x + \Delta_y = \frac{1}{2}\Delta_X + 2\Delta_r$, such that
  in these coordinates the last term has the form
   \begin{align}\nonumber 
    & h^{-4}
     \int_{\mathbb{R}^6}\overline{\alpha_0(r/h)\psi(X)}\left[-\frac{h^2}{4}\Delta_X
       -h^2 \Delta_r+ \frac{1}{2}V(r/h) +E_b/2\right]\alpha_0(r/h)
     \psi(X) \ddd{3}X \ddd{3}r \\  & =  \frac h4 \int_{\mathbb{R}^3}|\nabla
     \psi(x)|^2 \ddd{3}x,\label{nablaterm}
   \end{align}
   where we used the fact that  $\alpha_0$ is the normalized ground state of $-\Delta + \frac 12 V$. 
   
  The term
  $\alpha_\psi\overline{\alpha_\psi}\alpha_\psi\overline{\alpha_\psi}$ of $\gamma_\psi$
  inserted into
   $$\tr [-h^2 \Delta +
   E_b/2] \gamma_\psi$$ contributes  the quartic term $\frac h 2 g_{\rm bcs} 
   \int_{\mathbb{R}^3} |\psi(x)|^4 \ddd{3}x$ term in the GP
   functional. The remaining part of the $\frac h 2g \int_{\mathbb{R}^3}
   |\psi(x)|^4 \ddd{3}x$ term is due to the contribution of  $\alpha_\psi\overline{\alpha_\psi}$ in the direct and
   exchange interaction terms. The estimation of these terms is straightforward but tedious and occupies the main part of the proof.

  Furthermore, it will be easy to show that
  $$ h^2 \tr W \alpha_\psi\overline{\alpha_\psi} =  h^{-2} \int_{\mathbb{R}^3} W(X+  r/2) | \psi(X)|^2 |\alpha_0(r/h)|^2  \ddd{3}r
  = h \int_{\mathbb{R}^3} W(X) |\psi(X)|^2 \ddd{3}X+ O(h^2).$$
  Consequently we shall obtain
  \begin{equation}
    \label{eq:BHFequGP}
    \mathcal{E}^{\rm BHF}(\Gamma_{\psi}) + E_b \frac 1{2h}   = \frac h 2 \mathcal{E}^{\rm GP}(\psi) + O(h^{3/2}).
  \end{equation}
  Finally, we remark that the constraint $\tr\gamma_\psi = 1/h$
  implies for $\psi$ that $ \| \psi\|_2^2 = (1 - O(h^2))$. 
 Since
 $$\big|\mathcal{E}^{\rm GP}(\psi) - \mathcal{E}^{\rm GP}\big([1+O(h^2)]\psi\big)\big| \leq O(h^2)$$
 we obtain the bound
 \begin{equation}
   \label{eq:upper_bound}
   \inf_{\substack{0 \leq \Gamma\leq 1 \\ \tr(\gamma) = 1/h}}
   \mathcal{E}^{\rm BHF}(\Gamma)+E_b \frac{1}{2h}
   \leq \frac h2 \inf_{\substack{\psi \in
       H^1(\mathbb{R}^3)\\\|\psi\|_2^2 = 1}}\mathcal{E}^{\rm GP}(\psi)
   + O(h^{3/2})\,.
 \end{equation}
 The precise derivation of this bound will be given in Section~\ref{sec:upper_bound}.
 
 \begin{remark}
   \hfill
   \begin{itemize}
   \item Since the infimum of $\mathcal{E}^{\rm BHF}$ is attained by a projection \cite{BLS}, it would be natural to chose the trial state $\Gamma_\psi$ as a projection. The operator
     $\gamma_\psi$ would then satisfy
     $\gamma_\psi = \gamma_\psi^2 + \alpha_\psi\overline{\alpha_\psi}$.
     The expansion of $\gamma_\psi$ in terms of  $ \alpha_\psi\overline{\alpha_\psi}$ would be more complicated, however, and we find the choice \eqref{eq:Gamma_trial} more convenient. 
   \item Our trial state satisfies $0\leq
     \Gamma_\psi \leq 1$ for small enough $h$.  To see this note that the condition is equivalent to $0 \leq
     \Gamma_\psi(1-\Gamma_\psi)$.  If $\gamma_\psi$ is of the special
     form \eqref{eq:Gamma_trial}, which is a function of
     $\alpha_\psi\overline{\alpha_\psi}$, the off-diagonals of
     \begin{align*}
       \Gamma_\psi(1-\Gamma_\psi) &= \left(
         \begin{array}{ll}
           \gamma_\psi - \gamma_\psi^2 -
           \alpha_\psi\overline{\alpha_\psi} &
           \alpha_\psi\overline{\gamma_\psi} - \gamma_\psi\overline{\alpha_\psi } \\
           \overline{\gamma_\psi}\alpha_\psi -
           \overline{\alpha_\psi}\gamma_\psi &
           \overline{\gamma_\psi} - \overline{\gamma_\psi}^2 -
           \overline{\alpha_\psi}\alpha_\psi
         \end{array}\right)\\
       &= \left(
         \begin{array}{ll}
           \gamma_\psi - \gamma_\psi^2 -
           \alpha_\psi\overline{\alpha_\psi} &
           0 \\
           0 &
           \overline{\gamma_\psi} - \overline{\gamma_\psi}^2 -
           \overline{\alpha_\psi}\alpha_\psi
         \end{array}\right)
     \end{align*}
     vanish and thus the statement is equivalent to
     \begin{equation}
       \label{eq:Gamma_positivity}
       \gamma_\psi - \gamma_\psi^2 -
       \alpha_\psi\overline{\alpha_\psi} \geq 0.
     \end{equation}
     Plugging in the expression  \eqref{eq:Gamma_trial} for $\gamma_\psi$
     \eqref{eq:Gamma_positivity} is equivalent to
     \begin{equation}\label{guara}
       \alpha_\psi\overline{\alpha_\psi}
       \big(h^{1/2}
       -2(1+h^{1/2})\alpha_\psi\overline{\alpha_\psi}
       -(1+h^{1/2})^2(\alpha_\psi\overline{\alpha_\psi})^2\big)\alpha_\psi\overline{\alpha_\psi}
       \geq 0.
     \end{equation}
     In Corollary~\ref{cor:ineqgamma} below we shall show that the operator norm of $\alpha_\psi$ satisfies  $\|\alpha_\psi \|_\infty \lesssim h^{1/2} $, which guarantees that \eqref{guara}  is  satisfied for $h$ small enough.  In fact, $h^{1/2}$ in \eqref{eq:Gamma_trial} could be replaced by any factor large compared to $h$, but a different choice would not improve our error  bounds. 
      \end{itemize}
 \end{remark}

 \subsection{Lower bound}

 From the upper bound we learn that for an approximate ground state
 $\Gamma$ we can assume
$$\mathcal{E}^{\rm
  BHF}(\Gamma) \leq -E_b \frac{1}{2h} + O(h).$$ We will show in
Lemma~\ref{lemma:apriori} that the corresponding
$\alpha$ necessarily has to be of the form
\begin{equation}\label{deco}
\alpha(x,y) = \alpha_\psi(x,y) + \xi(x,y) = h^{-2}\psi\Bigl(\frac{x+y}{2}\Bigr)\alpha_0\Bigl(\frac{x-y}{h}\Bigr)  + \xi(x,y)
\end{equation}
 for an appropriate
$\psi \in H^1(\mathbb{R}^3)$, with $\xi$ being small compared  to $\alpha_\psi$, i.e.,
$$\|\xi\|^2_2 \lesssim h^2 \|\alpha_\psi\|^2_2 = O(h) .$$
The function $\psi$ is obtained by projecting $\alpha$ in the
direction of $\alpha_0$ with respect to the relative coordinates, 
$$\psi(X) = \frac{1}{h}\int_{\mathbb{R}^3}
\alpha_0(r/h)\alpha(X+r/2,X-r/2) \ddd{3}r.$$
We shall show that $\|\psi\|_2 = 1 + O(h^2)$. 

With this $\psi$ at hand one can then define $\Gamma_\psi$ as in \eqref{eq:Gamma_trial0}--\eqref{eq:Gamma_trial}. With the help of the decomposition \eqref{deco} 
one then argues  that the difference between
  $\mathcal{E}^{\rm BHF}(\Gamma)$ and $\mathcal{E}^{\rm
    BHF}(\Gamma_\psi)$ is bounded from below by a term of higher order than
  the contribution from the GP functional. More precisely, 
  \begin{equation}
    \label{eq:E:diff}
    \mathcal{E}^{\rm
      BHF}(\Gamma) \geq \mathcal{E}^{\rm
      BHF}(\Gamma_\psi)  - O(h^{3/2}).
  \end{equation}
This estimate is uniform in $\psi$, since it is possible to obtain a priori bounds on the $H^1$-norm of  $\psi$ that are independent of $h$. 
  Using now our calculation \eqref{eq:BHFequGP} from the upper bound
  immediately implies
  \begin{equation}
    \label{eq:lower_bound}
    \inf_{\substack{0 \leq \Gamma\leq 1 \\ \tr(\gamma) =
        1/h}}\mathcal{E}^{\rm BHF}(\Gamma)
    +E_b\frac{1}{2h}
    \geq  \frac h 2 \inf_{\substack{\psi \in
        H^1(\mathbb{R}^3)\\\|\psi\|_2^2 = 1}}
    \mathcal{E}^{\rm GP}(\psi) - O(h^{3/2}).
  \end{equation}
  Together with \eqref{eq:upper_bound} this combines to
  \eqref{eq:thm:GP}.

  \section{Useful properties of the pair-wavefunction}

  In the following we shall derive some useful properties of the pair wavefunction \eqref{eq:Gamma_trial}, which will be used throughout the proof. 
  Recall that $\alpha_0$ was defined in Assumption \ref{asm:V}
  to be the normalized ground state of $-2\Delta +V$. It is a rapidly decaying function, and both $|\alpha_0|$ and $|\nabla \alpha_0|$ have smooth Fourier transforms which are  in $L^p(\mathbb{R}^3)$ for any $p\geq 2$. 
  
  \begin{lemma}
    \label{lemma:inequalpha}
    Let $\alpha_\psi $ be defined as in \eqref{eq:Gamma_trial}, with $\psi \in
    H^1(\mathbb{R}^3)$. 
    \begin{enumerate}[label=(\roman*)]
    \item For $n \in \{2,4,6\}$, 
          \begin{subequations}
        \begin{align}
          \label{eq:alpha_n}
          \|\alpha_\psi\|_n^n \lesssim  h^{n-3}\|\psi\|_n^n\,
          \|\widehat{|\alpha_0|}\|_n^n, \\
          \label{eq:nabla_alpha_n}
          \|\nabla_{(x-y)}\alpha_\psi\|_n^n \lesssim
          h^{-3}\|\psi\|_n^n\, \|\widehat{|\nabla \alpha_0|}\|_n^n,
        \end{align}
      \end{subequations}
      where
    $$(\nabla_{(x-y)} \alpha_\psi)(x, y) = h^{-3}
    \psi\bigl((x+y)/2\bigr) (\nabla\alpha_0)\bigl((x-y)/h\bigr).$$
  \item With $g_{\rm bcs}$ defined in \eqref{def:gbcs}, 
    \begin{equation}
      \label{eq:Delta_alpha_psi_4}
      \tr \big( (-h^2\Delta+E_b/2)
      \alpha_\psi\overline{\alpha_\psi}\alpha_\psi\overline{\alpha_\psi}
      \big) = \frac h 2  g_{\rm bcs} \|\psi\|_4^4 + O(h^2) \|\nabla\psi\|_2^4 \,.
    \end{equation}
  \item \hfill\makebox[0pt][r]{\begin{minipage}[b]{\textwidth}
        \begin{equation}
          \label{eq:sup_alpha2}
          \| \alpha_\psi \overline{\alpha_\psi}(\cdot,\cdot)\|_\infty
          = \sup_x | \alpha_\psi \overline{\alpha_\psi}(x,x)| \lesssim
          h^{-2} \|\alpha_0\|_3^2\,\|\nabla \psi\|_2^2.
        \end{equation}
      \end{minipage}}
  \item Let $\sigma$ be a Hilbert-Schmidt operator. Then
    \begin{equation}
      \label{twonorm}
      |\sigma \alpha_\psi (x,x)| \lesssim h^{-1}  \| \sigma (\cdot, x)\|_2
      \|\nabla \psi\|_2\, \|\alpha_0\|_3 \quad \forall x \in \mathbb{R}^3\,.
    \end{equation}
  \end{enumerate}
\end{lemma}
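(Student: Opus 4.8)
The plan is to establish the four claims in the order \eqref{eq:sup_alpha2}, \eqref{twonorm}, then \eqref{eq:alpha_n}--\eqref{eq:nabla_alpha_n}, then \eqref{eq:Delta_alpha_psi_4}, since the first is elementary, the second reduces to it, the third reuses its change of variables, and the last is the one genuine computation. For \eqref{eq:sup_alpha2}: since $\alpha_\psi$ is symmetric and $\overline{\alpha_\psi}=\alpha_\psi^*$, the diagonal of $\alpha_\psi\overline{\alpha_\psi}$ is $\alpha_\psi\overline{\alpha_\psi}(x,x)=\int_{\mathbb{R}^3}|\alpha_\psi(x,z)|^2\ddd{3}z$, and the substitution $z=x-hu$ turns this into $h^{-1}\int_{\mathbb{R}^3}|\psi(x-hu/2)|^2\,|\alpha_0(u)|^2\ddd{3}u$; I would then apply Hölder with exponents $3$ and $3/2$ (pairing $|\psi|^2\in L^3$ against $|\alpha_0|^2\in L^{3/2}$), undo the scaling $u\mapsto x-hu/2$, and invoke $H^1(\mathbb{R}^3)\hookrightarrow L^6(\mathbb{R}^3)$. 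For \eqref{twonorm}, Cauchy--Schwarz in the integration variable gives $|\sigma\alpha_\psi(x,x)|\le\|\sigma(\cdot,x)\|_2\,\|\alpha_\psi(\cdot,x)\|_2$, and $\|\alpha_\psi(\cdot,x)\|_2^2=\alpha_\psi\overline{\alpha_\psi}(x,x)$ by symmetry of $\alpha_\psi$, so \eqref{eq:sup_alpha2} closes it.

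For \eqref{eq:alpha_n}--\eqref{eq:nabla_alpha_n}: for even $n$ one has $\|\alpha_\psi\|_n^n=\tr\big((\overline{\alpha_\psi}\alpha_\psi)^{n/2}\big)$, which is an $n$-fold integral of a product of kernel values around a loop, and taking absolute values pointwise dominates it by $\tr(B^n)$, where $B$ is the operator with the nonnegative symmetric kernel $h^{-2}|\psi((x+y)/2)|\,|\alpha_0((x-y)/h)|$. In this loop integral over $(x_1,\dots,x_n)$ I would change variables to $x_1$ together with the rescaled relative coordinates $r_i=(x_i-x_{i+1})/h$ (indices mod $n$), the Jacobian contributing $h^{3(n-1)}$; each $\psi$-factor becomes $|\psi(X_i)|$ with $X_i$ affine in $x_1$ and the $r_i$, and the AM--GM bound $\prod_{i=1}^n|\psi(X_i)|\le\frac1n\sum_i|\psi(X_i)|^n$ decouples them, so that the $x_1$-integral of each summand is $\|\psi\|_n^n$, while the $r$-integral is the value at $0$ of the $n$-fold self-convolution of $|\alpha_0|$, equal up to a constant to $\|\widehat{|\alpha_0|}\|_n^n$ (finite for $n\in\{2,4,6\}$ as recalled above). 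Collecting $h^{-2n}\cdot h^{3(n-1)}=h^{n-3}$ gives \eqref{eq:alpha_n}, and \eqref{eq:nabla_alpha_n} is identical with $\alpha_0$ replaced by $\nabla\alpha_0$ and the prefactor $h^{-2}$ by $h^{-3}$.

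For \eqref{eq:Delta_alpha_psi_4} I would pass to the momentum representation, where $\alpha_\psi$ has integral kernel $h\,\widehat\psi(p-p')\,\widehat\alpha_0\big(h(p+p')/2\big)$ (choosing $\alpha_0$ real and even, so $\widehat\alpha_0$ is too), and write
\[
  \tr\big((-h^2\Delta+E_b/2)(\alpha_\psi\overline{\alpha_\psi})^2\big)
  =\int_{\mathbb{R}^3}\big(h^2p^2+\tfrac{E_b}{2}\big)\,\widehat{(\alpha_\psi\overline{\alpha_\psi})^2}(p,p)\ddd{3}p,
\]
evaluating the diagonal kernel on the right as a nested momentum integral. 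The factors $\widehat\alpha_0(h\,\cdot)$ concentrate all momenta on the scale $h^{-1}$, so after the rescaling $P=hp$ (and likewise for the inner variables) and the identification of the $\widehat\psi$-autocorrelations with $\mathrm{const}\cdot\widehat{|\psi|^2}$, the leading term is $\int(P^2+E_b/2)\widehat\alpha_0(P)^4\ddd{3}P$ times $\||\psi|^2\|_2^2=\|\psi\|_4^4$ times the correct power of $h$; since $\int(P^2+E_b/2)\widehat\alpha_0(P)^4\ddd{3}P=\tfrac12(2\pi)^{-3}g_{\rm bcs}$ this reproduces $\tfrac h2 g_{\rm bcs}\|\psi\|_4^4$. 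The remainder comes from (a) replacing $\widehat\alpha_0(h(p+p')/2)$ by $\widehat\alpha_0(hp)$, with error of relative size $h|p-p'|$ whose extra $|p-p'|$ is absorbed by a derivative on $\widehat\psi$, and (b) replacing the $\widehat{|\psi|^2}$-localization by a point evaluation; each correction carries one additional power of $h$ and one additional derivative on $\psi$.

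The short items (i), (iii), (iv) need only the above substitutions plus standard inequalities, the only slightly nonobvious input being the Schatten-norm domination of $\|\alpha_\psi\|_n^n$ by $\tr(B^n)$, legitimate precisely because $n$ is even. The hard part will be \eqref{eq:Delta_alpha_psi_4}: forcing the remainder into exactly the form $O(h^2)\|\nabla\psi\|_2^4$ requires expanding the $\psi$- and $\alpha_0$-profiles to the right order, checking that the errors are genuinely $O(h^2)$ rather than merely $O(h^{3/2})$, and -- most delicately -- estimating all the arising mixed norms of $\psi$ and $\nabla\psi$ purely through $\|\nabla\psi\|_2$, which is where three-dimensional Sobolev and Gagliardo--Nirenberg inequalities (e.g.\ $\|\psi\|_4^4\lesssim\|\psi\|_2\|\nabla\psi\|_2^3$, $\|\psi\|_6\lesssim\|\nabla\psi\|_2$) together with the near-normalization $\|\psi\|_2=1+O(h^2)$ come in.
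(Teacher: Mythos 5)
For items (i), (iii) and (iv) your plan coincides with the paper's proof up to cosmetics: \eqref{eq:sup_alpha2} is exactly the paper's H\"older $(3,3/2)$ plus Sobolev argument on the diagonal $\int|\alpha_\psi(x,z)|^2\ddd{3}z$; your reduction of \eqref{twonorm} to \eqref{eq:sup_alpha2} via Cauchy--Schwarz is equivalent to the paper's one-line three-factor H\"older estimate; and for \eqref{eq:alpha_n}--\eqref{eq:nabla_alpha_n} you use the same loop-integral representation of the Schatten norm and the same change to center-of-mass/relative coordinates, with AM--GM in place of the paper's H\"older in the $X$ variable to decouple the $\psi$-factors --- an interchangeable step. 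The power counting $h^{-2n}h^{3(n-1)}=h^{n-3}$ and the identification of the $r$-integral with $\|\widehat{|\alpha_0|}\|_n^n$ match the paper.

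For item (ii) you take a genuinely different route (momentum space) and your leading-order bookkeeping is correct, including the identification $\int(P^2+E_b/2)|\widehat\alpha_0(P)|^4\ddd{3}P=\tfrac12(2\pi)^{-3}g_{\rm bcs}$ and the power of $h$. However, as written the error analysis has a real soft spot, and it is precisely where the paper does its work. First, your step (a) treats the replacement $\widehat\alpha_0\bigl(h(p+p')/2\bigr)\to\widehat\alpha_0(hp)$ as an error of relative size $h|p-p'|$, but $|p-p'|$ is unbounded and the remaining three $\widehat\alpha_0$-factors sit at shifted arguments, so one must both establish quantitative decay of $\widehat\alpha_0$ and $\nabla\widehat\alpha_0$ (via the eigenvalue equation) and redistribute the kinetic weight $h^2p^2$ among the factors; the paper sidesteps all of this by working in position space and using $(-\Delta+E_b/2)\alpha_0=-\tfrac12 V\alpha_0$, so that every error is a weighted $L^1$/$L^2$ norm of the rapidly decaying $\alpha_0$, $V\alpha_0$, $|\cdot|\alpha_0$. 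Second, and more seriously, ``absorbing $|p-p'|$ by a derivative on $\widehat\psi$'' and then closing the estimate purely in momentum space naturally produces pairings like $\|\widehat{\nabla\psi}\star\widehat\psi\|_2=\mathrm{const}\,\|\overline\psi\,\nabla\psi\|_2$, which is \emph{not} controlled by the $H^1$-norm (it needs $\|\psi\|_\infty$ or $\|\nabla\psi\|_3$), so the remainder does not automatically come out as $O(h^2)\|\nabla\psi\|_2^4$. The paper avoids this by Taylor-expanding the four $\psi$-factors in the center-of-mass \emph{position} variable and applying H\"older there with exponents $(2,6,6,6)$, giving $\|\nabla\psi\|_2\|\psi\|_6^3\lesssim\|\nabla\psi\|_2^4$; it also isolates the $\tfrac14\Delta_X$ contribution separately and bounds it by $\|\nabla_X\alpha_\psi\|_2^2\|\alpha_\psi\|_\infty^2$. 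Your plan can certainly be completed, but to do so you will effectively have to re-introduce these two devices (the eigenvalue equation for the relative profile, and a position-space H\"older step in the center-of-mass variable), so you should either build them into the momentum-space argument explicitly or switch to the mixed representation at that point.
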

Let us mention that we use the symbol $\| \cdot\|_p$ for the
$L^p$-norm of functions as well as for the operator norm in the
corresponding Schatten class. E.g.,  the left side of \eqref{eq:alpha_n} concerns Schatten norms, while on the right side the norms are in $L^n(\mathbb{R}^3)$.

\begin{proof}[Proof of Lemma~\ref{lemma:inequalpha}, Part I]
  We postpone the proof of \eqref{eq:alpha_n},
  \eqref{eq:nabla_alpha_n} and \eqref{eq:Delta_alpha_psi_4} to 
  the appendix.  In order to see
  \eqref{eq:sup_alpha2} we use the H\"older  and Sobolev inequalities as 
  \begin{align*}
    \begin{split}
      (\alpha_\psi\overline{\alpha_\psi})(x,x) &= \int_{\mathbb{R}^3}
      |\alpha_\psi(x,y)|^2 \ddd{3}y = h^{-4}\int_{\mathbb{R}^3}
      \left|\alpha_0\big((x-y)/h\big)\right|^2\,\left|\psi\big((x+y)/{2})\big)\right|^2
      \ddd{3}y\\
      &\lesssim h^{-4} \|\alpha_0(\cdot/h)^2\|_{3/2} \| |\psi|^2 \|_3 \lesssim
      h^{-2}  \|\alpha_0\|_3^2\|\nabla \psi\|_2^2 \,.
    \end{split}
  \end{align*}
Similarly, 
  \begin{align*}
    | (\sigma\alpha_\psi)(x,x)| &=h^{-2} \left | \int_{\mathbb{R}^6}
      \overline{\sigma(x,y)} \alpha_0\big((x-y)/h\big)
      \psi\big((x+y)/2\big) \ddd{3}y \right| \\
    &\lesssim h^{-2} \| \sigma (\cdot, x)\|_2 \|\alpha_0(\cdot /h)\|_3
    \|\psi\|_6 \lesssim h^{-1} \| \sigma (\cdot, x)\|_2 \|\nabla \psi\|_2
    \|\alpha_0\|_3\,,
  \end{align*}
which implies \eqref{twonorm}.
\end{proof}

\begin{corollary}
  \label{cor:ineqgamma}
  Let the assumptions be as in Lemma~\ref{lemma:inequalpha} and assume further that $\|\psi\|_{H^1} \lesssim 1$. Then
  \begin{subequations}
    \begin{align}
      \label{eq:alpha_4}
      \| \alpha_\psi \|_4^4 &\lesssim h\|\psi\|_2\|\nabla
      \psi\|_2^3\, \|
      \widehat{\alpha_0}\|_{4}^4 \lesssim h,\\
      \label{eq:alpha_6}
      \| \alpha_\psi \|_6^6 &\lesssim
      h^3  \|\nabla \psi\|_2^6\, \| \widehat{\alpha_0}\|_6^6 \lesssim h^3,\\
      \label{eq:nabla_alpha_6}
      \| \nabla_{(x-y)} \alpha_\psi \|_6^6 &\lesssim h^{-3}\|\nabla
      \psi\|_6^6\, \|\widehat{ |\nabla\alpha_0| }\|_6^6 \lesssim h^{-3},\\
      \label{eq:alpha_inf}
      \|\alpha_\psi\|_\infty &\lesssim  h^{1/2}  \|\nabla \psi\|_2\, \|
      \widehat{\alpha_0}\|_6 \lesssim h^{1/2},\\
      \label{alph4}
   \sup_{x\in\mathbb{R}^3}     (\alpha_\psi\overline{\alpha_\psi}\alpha_\psi\overline{\alpha_\psi})(x,x)
      &\leq \|\alpha_\psi\|^2_\infty  \sup_{x\in\mathbb{R}^3} 
      (\alpha_\psi\overline{\alpha_\psi})(x,x)   \lesssim h^{-1}.
    \end{align}
  \end{subequations}
  Moreover, with $\gamma_\psi$ defined as in \eqref{eq:Gamma_trial}, 
  \begin{subequations}
    \begin{align}
      \label{normgammapsi}
      \|\gamma_\psi\|_\infty &\leq \|\alpha_\psi\|_\infty^2 + (1 +
      h^{1/2} )\|\alpha_\psi\|_\infty^4 \lesssim h  \\
      \label{eq:gamma_psi_diag}
  \sup_{x\in\mathbb{R}^3}    \gamma_\psi(x,x) &\lesssim h^{-2}.
    \end{align}
  \end{subequations}
\end{corollary}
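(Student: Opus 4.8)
\section{Proof proposal for Corollary~\ref{cor:ineqgamma}}

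The plan is to obtain every bound in the corollary from Lemma~\ref{lemma:inequalpha} by feeding its estimates into the classical Gagliardo--Nirenberg and Sobolev inequalities in three dimensions and then invoking the hypothesis $\|\psi\|_{H^1}\lesssim 1$. Throughout, the $\alpha_0$-dependent quantities $\|\widehat{|\alpha_0|}\|_p$, $\|\widehat{|\nabla\alpha_0|}\|_p$ and $\|\alpha_0\|_3$ are finite constants, as recorded before the lemma, so they may be absorbed into the implicit constants in the final inequalities; the only genuine input is the $h$-dependence supplied by Lemma~\ref{lemma:inequalpha} and the $\psi$-independence provided by $\|\psi\|_{H^1}\lesssim 1$.

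First I would prove \eqref{eq:alpha_4} and \eqref{eq:alpha_6}: apply \eqref{eq:alpha_n} with $n=4$ and $n=6$, and bound the resulting $L^n$-norms of $\psi$ by $\|\psi\|_4^4\lesssim\|\psi\|_2\|\nabla\psi\|_2^3$ (Gagliardo--Nirenberg in $d=3$) and $\|\psi\|_6^6\lesssim\|\nabla\psi\|_2^6$ (the Sobolev embedding $\dot H^1(\mathbb{R}^3)\hookrightarrow L^6$); the final $\lesssim h$ and $\lesssim h^3$ then follow from $\|\psi\|_{H^1}\lesssim 1$. The same reasoning applied to \eqref{eq:nabla_alpha_n} with $n=6$, again bounding $\|\psi\|_6\lesssim\|\nabla\psi\|_2$, yields \eqref{eq:nabla_alpha_6}. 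For \eqref{eq:alpha_inf} it suffices to note that the operator norm is dominated by every Schatten norm, $\|\alpha_\psi\|_\infty\le\|\alpha_\psi\|_6$, and to take the sixth root of \eqref{eq:alpha_6}.

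For \eqref{alph4}, set $A:=\alpha_\psi\overline{\alpha_\psi}=\alpha_\psi\alpha_\psi^*\ge 0$, so that $\|A\|_\infty=\|\alpha_\psi\|_\infty^2$. Using $A\le\|A\|_\infty I$ as operators and conjugating by $A^{1/2}$ gives $A^2\le\|A\|_\infty A=\|\alpha_\psi\|_\infty^2A$; comparing the (continuous) kernels of these two ordered positive operators on the diagonal yields $(\alpha_\psi\overline{\alpha_\psi}\alpha_\psi\overline{\alpha_\psi})(x,x)\le\|\alpha_\psi\|_\infty^2\,(\alpha_\psi\overline{\alpha_\psi})(x,x)$, the first inequality of \eqref{alph4}. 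The bound $\lesssim h^{-1}$ then follows by inserting \eqref{eq:alpha_inf} ($\|\alpha_\psi\|_\infty^2\lesssim h$) and \eqref{eq:sup_alpha2} together with $\|\nabla\psi\|_2\lesssim 1$ ($\sup_x(\alpha_\psi\overline{\alpha_\psi})(x,x)\lesssim h^{-2}$). Finally, \eqref{normgammapsi} and \eqref{eq:gamma_psi_diag} are read off the explicit formula $\gamma_\psi=A+(1+h^{1/2})A^2$ from \eqref{eq:Gamma_trial}: for the operator norm use $\|A\|_\infty=\|\alpha_\psi\|_\infty^2$, $\|A^2\|_\infty=\|\alpha_\psi\|_\infty^4$ and \eqref{eq:alpha_inf}, giving $\lesssim h+h^2\lesssim h$; for the diagonal use \eqref{eq:sup_alpha2} for $A(x,x)\lesssim h^{-2}$ and \eqref{alph4} for $A^2(x,x)\lesssim h^{-1}$, the term of order $h^{-2}$ dominating.

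There is no real obstacle here: the corollary is essentially bookkeeping, combining Lemma~\ref{lemma:inequalpha} with classical inequalities and the a priori control $\|\psi\|_{H^1}\lesssim 1$. The only slightly delicate point is the passage from the operator inequality $A^2\le\|\alpha_\psi\|_\infty^2A$ to the corresponding pointwise inequality for the diagonal values of the kernels; this is legitimate because $\alpha_\psi$ is bounded and built from smooth, rapidly decaying functions, so $A$ and $A^2$ have continuous kernels, and for operators with continuous kernels $0\le B\le C$ implies $B(x,x)\le C(x,x)$ for every $x$ (test both sides against an approximate identity concentrating at $x$ and pass to the limit).
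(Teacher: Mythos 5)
Your proposal is correct and follows essentially the same route as the paper: feed \eqref{eq:alpha_n}--\eqref{eq:nabla_alpha_n} into Sobolev/Gagliardo--Nirenberg (the paper writes the $n=4$ case as $\|\psi\|_4\leq\|\psi\|_2^{1/4}\|\psi\|_6^{3/4}\lesssim\|\psi\|_2^{1/4}\|\nabla\psi\|_2^{3/4}$, which is the same interpolation), use $\|\alpha_\psi\|_\infty\leq\|\alpha_\psi\|_6$, bound the diagonal of $(\alpha_\psi\overline{\alpha_\psi})^2$ by $\|\alpha_\psi\overline{\alpha_\psi}\|_\infty(\alpha_\psi\overline{\alpha_\psi})(x,x)$, and read off the $\gamma_\psi$ bounds from \eqref{eq:Gamma_trial}. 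Your extra care in justifying the pointwise diagonal inequality (operator ordering plus continuity of the kernels) is a legitimate elaboration of a step the paper declares ``easy to see''; one can also get it directly from $(\alpha_\psi\overline{\alpha_\psi})^2(x,x)=\|(\alpha_\psi\overline{\alpha_\psi})(x,\cdot)\|_2^2\leq\|\alpha_\psi\|_\infty^2(\alpha_\psi\overline{\alpha_\psi})(x,x)$.
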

\begin{proof}
  The estimates \eqref{eq:alpha_4}--\eqref{eq:nabla_alpha_6} are a consequence of \eqref{eq:alpha_n} and
  \eqref{eq:nabla_alpha_n}. In the case of $n=6$, we use the Sobolev's
  inequality and in the
  case of $n=4$, we use H\"older combined  with Sobolev
  to conclude
  \begin{equation*}
    \|\psi\|_4 \leq \|\psi\|_2^{1/4} \|\psi\|_6^{3/4} \lesssim
    \|\psi\|_2^{1/4}\|\nabla \psi\|_2^{3/4}.
  \end{equation*}
  Inequality~\eqref{eq:alpha_inf} follows immediately from
  $\|\alpha_\psi\|_\infty \leq \|\alpha_\psi\|_6$ together with
  \eqref{eq:alpha_6}.

It is easy to see that
  $$
  (\alpha_\psi\overline{\alpha_\psi}\alpha_\psi\overline{\alpha_\psi})(x,x)
  \leq \|\alpha_\psi\overline{\alpha_\psi}\|_\infty
  (\alpha_\psi\overline{\alpha_\psi})(x,x)
  $$
  which implies \eqref{alph4} with the use of \eqref{eq:sup_alpha2}.
Eq.~\eqref{normgammapsi} follows from \eqref{eq:alpha_inf} and  
  Eq.~\eqref{eq:gamma_psi_diag} is a direct consequence of \eqref{eq:sup_alpha2} and \eqref{alph4}.
\end{proof}

\begin{remark}
Since $\gamma_\psi$ is to leading order equal to $\alpha_\psi
\overline{ \alpha_\psi}$, we obtain as a corollary that the operator norm of
$\gamma_\psi $ is at most $O(h)$, meaning that the largest eigenvalue of the one-particle density matrix 
is of order $h$.  However, the two-body density matrix corresponding to the state $\Gamma_\psi$ is to leading order of the form
$  | \alpha_\psi \rangle \langle \alpha_\psi | $, and hence has  one large
eigenvalue of order $h^{-1}$. This is a manifestation of the Bose--Einstein condensation of the fermion pairs. 
\end{remark}

\section{Upper bound}

\label{sec:upper_bound}

For  $\psi \in H^1(\mathbb{R}^3)$, we define the trial state $\Gamma_\psi$ as in \eqref{def:alphapsi}--\eqref{eq:Gamma_trial}.  
Since we require the  normalization condition
$\tr\gamma_\psi = 1/h$, we have to adjust 
  the $L^2$-norm of $\psi$ accordingly, i.e., 
$$ 1/h = \tr \gamma_\psi = \frac 1h \|\psi\|_2^2 + (1+h^{1/2}) \|\alpha_\psi\|_4^4.$$
Together with \eqref{eq:alpha_4}
this implies
$$\big| \|\psi\|_2^2 - 1\big| \lesssim h^2 \|\psi\|_2 \|\nabla \psi\|_2^3 ,$$
and thus $\|\psi\|_2^2 = 1 + O(h^2)$.

The desired upper bound \eqref{eq:upper_bound} is then an immediate consequence of the following  estimates:
   \begin{subequations}
     \begin{align}
       \begin{split}
         &\tr (-h^2 \Delta +E_b/2)\gamma_\psi + \frac{1}{2}
         \int_{\mathbb{R}^6} V\bigl((x-y)/h)\bigr)
         |\alpha_\psi(x,y)|^2
         \ddd{3} x \ddd{3}y \\
         &\qquad= h \int_{\mathbb{R}^3}\left( \frac{1}{4}
           |\nabla\psi(x)|^2 + \frac 12 g_{\rm bcs}|\psi(x)|^4 \right) \ddd{3}x
         + O(h^{3/2}) \label{eq:upper_bound:H}
       \end{split}
       \\
       &\tr h^2W\gamma_\psi = h\int_{\mathbb{R}^3}
       W(x)|\psi(x)|^2\ddd{3}x + O(h^2) \label{eq:upper_bound:W} \\
       &- \frac{1}{2}\int_{\mathbb{R}^6}|\gamma_\psi(x,y)|^2
       V\Bigl(\frac{x-y}{h}\Bigr)\ddd{3}x \ddd{3}y = \frac h2 g_{\rm ex}\,
       \int_{\mathbb{R}^3} |\psi(x)|^4 \ddd{3}x +
       O(h^{2}) \label{eq:upper_bound:ex}
       \\
       &\int_{\mathbb{R}^6} \gamma_\psi(x,x)\gamma_\psi(y,y)
       V\Bigl(\frac{x-y}{h}\Bigr)\ddd{3}x \ddd{3}y = \frac h2 g_{\rm
         dir}\,\int_{\mathbb{R}^3} |\psi(x)|^4 \ddd{3}x +
       O(h^{2}), \label{eq:upper_bound:dir}
     \end{align}
   \end{subequations}
   where the constants $g_{\rm bcs}$, $g_{\rm ex}$ and $g_{\rm dir}$ are defined in Remark \ref{gis}.  The remainder of this section will be devoted to the proof of these estimates.

 \subsection{Kinetic and potential energy (Proof of
   \eqref{eq:upper_bound:H})}
 \label{sec:upper_bound:H}

Eq.~\eqref{eq:upper_bound:H} is an immediate consequence of the 
 calculation in \eqref{nablaterm} and the bound \eqref{eq:Delta_alpha_psi_4}, using the definition \eqref{eq:Gamma_trial} of $\gamma_\psi$.

 \subsection{External potential (Proof of \eqref{eq:upper_bound:W})}
 \label{sec:upper_bound:W}

 By \eqref{eq:alpha_4} of Corollary~\ref{cor:ineqgamma}, we obtain
  \begin{equation*}
   \tr(h^2 W\alpha_\psi\overline{\alpha_\psi}\alpha_\psi\overline{\alpha_\psi}) \leq  h^2 \|W\|_\infty \tr
   (\alpha_\psi\overline{\alpha_\psi}\alpha_\psi\overline{\alpha_\psi})
   \lesssim h^3.
 \end{equation*}
The leading contribution is thus given by $ h^2 \tr (W\alpha_\psi\overline{\alpha_\psi}) $, which we can write as 
 \begin{equation}
   h^2 \tr (W\alpha_\psi\overline{\alpha_\psi}) = h^2
   \int_{\mathbb{R}^6} W(x)|\alpha_\psi(x,y)|^2 \ddd{3} x\ddd{3} y 
   = h \int_{\mathbb{R}^6} W(X) |\psi(X - h r/2)|^2 |\alpha_0(r)|^2
   \ddd{3} X\ddd{3} r \,.
 \end{equation}
 From  the fundamental theorem of calculus we obtain
 \begin{align*}
   h^2 \tr (W\alpha_\psi\overline{\alpha_\psi}) &= h
   \int_{\mathbb{R}^6} W(X) |\psi(X)|^2
   |\alpha_0(r)|^2 \ddd{3} X\ddd{3} r \\
   &\qquad +h \int_{\mathbb{R}^6} \int_0^1 W(X)
   \frac{\partial}{\partial \tau}|\psi(X - \tau h r/2)|^2
   |\alpha_0(r)|^2 \dd{\tau} |\alpha_0(r)|^2 \ddd{3} X\ddd{3} r.
 \end{align*}
 Using the Cauchy-Schwarz inequality for the integration over the $X$ variable, the last integral is bounded by
 \begin{align*}
   &\left|h \int_{\mathbb{R}^6} \int_0^1 W(X) \Re \big(h\,r\cdot
     \nabla \psi(X - \tau h r/2) \overline{\psi(X - \tau h r/2)}\big)
     \dd{\tau}
     |\alpha_0(r)|^2 \ddd{3} X\ddd{3} r\right|\\
   &\qquad\leq h^2 \|W\|_\infty \|\nabla \psi\|_2 \|\psi\|_2
   \|\sqrt{|\cdot|}\alpha_0\|_2^2.
 \end{align*}
 Since $\alpha_0$ is the ground state of the Schr\"odinger operator $-2\Delta + V$ and hence rapidly decaying,  
 $ \|\sqrt{|\cdot|}\alpha_0\|_2$ is finite.
 This shows \eqref{eq:upper_bound:W}.

 \subsection{Direct and exchange term (Proof of
   \eqref{eq:upper_bound:ex} and \eqref{eq:upper_bound:dir})}
 \label{sec:upper_bound:ex}
 We first argue that the leading order contribution of the direct and
 exchange terms originates from replacing $\gamma_\psi$ by
 $\alpha_\psi\overline{\alpha_\psi}$.  To see this, we simply estimate the
 differences
 \begin{subequations}
   \begin{equation}
     \label{eq:A_ex}
     \int_{\mathbb{R}^6}|\gamma_\psi(x,y)|^2
     V\big((x-y)/h\big)\ddd{3}x \ddd{3}y -
     \int_{\mathbb{R}^6}|(\alpha_\psi\overline{\alpha_\psi})(x,y)|^2
     V\big((x-y)/h\big)\ddd{3}x \ddd{3}y
   \end{equation}
   and
   \begin{equation}
     \label{eq:A_dir}
     \int_{\mathbb{R}^6} \gamma_\psi(x,x)\gamma_\psi(y,y)
     V\big((x-y)/h\big)\ddd{3}x \ddd{3}y -
     \int_{\mathbb{R}^6}
     (\alpha_\psi\overline{\alpha_\psi})(x,x)(\alpha_\psi\overline{\alpha_\psi})(y,y)
     V\big((x-y)/h\big)\ddd{3}x \ddd{3}y.
   \end{equation}
 \end{subequations}
 Both expressions  can be bounded using the following lemma, whose
 proof is elementary.
 \begin{lemma}
   \label{lemma:diff:dir_ex}
   Let $\sigma(x,y)$ and $\delta(x,y)$ be integral kernels of two
   positive trace class operators.  Then
   \begin{subequations}
     \begin{equation}
       \label{eq:diff:dir}
       \begin{split}
         &\left|\int_{\mathbb{R}^6} V(x-y)
           \left[(\sigma+\delta)(x,x)(\sigma+\delta)(y,y) -
             \sigma(x,x)\sigma(y,y) \right] \ddd{3}x \ddd{3}y
         \right|\\
         &\qquad\leq 2 \int_{\mathbb{R}^6} |V(x-y)|
         (\sigma+\delta)(x,x)\delta(y,y)\ddd{3}x \ddd{3}y
       \end{split}
     \end{equation}
     and
     \begin{equation}
       \label{eq:diff:ex}
       \begin{split}
         \left|\int_{\mathbb{R}^6} V(x-y) \left[
             |(\sigma+\delta)(x,y)|^2 - |\sigma(x,y)|^2 \right]
           \ddd{3}x\ddd{3}y \right| \leq 2 \int_{\mathbb{R}^6}
         |V(x-y)| (\sigma+\delta)(x,x)\delta(y,y)\ddd{3}x \ddd{3}y.
       \end{split}
     \end{equation}
   \end{subequations}
 \end{lemma}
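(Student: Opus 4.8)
The plan is to prove both inequalities \eqref{eq:diff:dir} and \eqref{eq:diff:ex} simultaneously by exploiting bilinearity and the positivity of the operators. First I would observe that for \eqref{eq:diff:dir} the bracket telescopes: writing $f=\sigma+\delta$ and $g=\sigma$, we have $f(x,x)f(y,y)-g(x,x)g(y,y) = \delta(x,x)f(y,y) + g(x,x)\delta(y,y)$ (adding and subtracting $g(x,x)f(y,y)$), and then relabeling $x\leftrightarrow y$ in the second term using the symmetry $V(x-y)=V(y-x)$ from Assumption~\ref{asm:V}, both terms become $\delta(x,x)f(y,y)$ up to the naming of variables. Taking absolute values and bounding $|V|$ gives exactly the right-hand side, since $g(x,x)=\sigma(x,x)\le f(x,x)$ pointwise (here we use that $\sigma$ and $\delta$ are kernels of positive operators, so their diagonals are nonnegative).

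For \eqref{eq:diff:ex} I would use the analogous algebraic identity $|(\sigma+\delta)(x,y)|^2 - |\sigma(x,y)|^2 = 2\Re\big(\overline{\sigma(x,y)}\,\delta(x,y)\big) + |\delta(x,y)|^2$. The term $|\delta(x,y)|^2$ is controlled by $\delta(x,x)\delta(y,y)$ via the standard inequality for positive kernels, which is dominated by $(\sigma+\delta)(x,x)\delta(y,y)$. For the cross term, $|\Re(\overline{\sigma(x,y)}\delta(x,y))| \le |\sigma(x,y)|\,|\delta(x,y)| \le \tfrac12(\sigma(x,x)\sigma(y,y))^{1/2}(\delta(x,x)\delta(y,y))^{1/2}\cdot 2$; I would then apply the arithmetic–geometric mean inequality, splitting as $\sigma(x,x)\delta(y,y)$ and $\sigma(y,y)\delta(x,x)$, and again use $V$-symmetry to symmetrize. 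Collecting terms, the total is bounded by $\int |V(x-y)|\big[2(\sigma(x,x)+\delta(x,x))\delta(y,y)\big]$ after relabeling, which is the claimed bound.

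The only genuinely nontrivial ingredient is the pointwise Cauchy–Schwarz estimate $|\rho(x,y)|^2 \le \rho(x,x)\rho(y,y)$ for the integral kernel of a positive operator $\rho$; this is already used implicitly in Section~\ref{sec:stability} for $\gamma$ and holds for any positive trace-class (indeed any positive Hilbert–Schmidt) operator with a continuous representative of its kernel, since $\rho = T^*T$ gives $\rho(x,y) = \langle T\delta_x, T\delta_y\rangle$ formally, or more rigorously by approximating $\delta_x$ with an approximate identity. I expect no real obstacle here: the lemma is purely elementary and the statement already flags it as such. The main care needed is simply to track the factor of $2$ and to make sure the symmetrization via $V(x-y)=V(y-x)$ is invoked at the right point so that both displayed bounds come out with the same right-hand side. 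No regularity beyond $\sigma,\delta$ being positive trace-class (so that all diagonal quantities make sense and are integrable against the bounded potential $V$) is required.
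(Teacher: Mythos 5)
Your proposal is correct and follows essentially the same route as the paper: telescope the direct-term difference and symmetrize using $V(x-y)=V(y-x)$, and for the exchange term combine the pointwise Cauchy--Schwarz bound $|a(x,y)|^2\le a(x,x)a(y,y)$ for positive kernels with the arithmetic--geometric mean inequality and the same symmetrization. The only difference is cosmetic: you split the exchange difference as $2\Re\big(\overline{\sigma}\delta\big)+|\delta|^2$ while the paper writes it as $\overline{(\sigma+\delta)}\,\delta+\overline{\delta}\,\sigma$, and both splits yield the stated bound.
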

 \begin{proof}
   To show \eqref{eq:diff:dir}, we simply use
   \begin{align*}
     &(\sigma+\delta)(x,x)(\sigma+\delta)(y,y) - \sigma(x,x)\sigma(y,y)\\
     &\qquad=
     (\sigma+\delta)(x,x)\delta(y,y) + \delta(x,x)\sigma(y,y)\\
     &\qquad\leq (\sigma+\delta)(x,x)\delta(y,y) +
     \delta(x,x)(\sigma+\delta)(y,y).
   \end{align*}
   Eq.~\eqref{eq:diff:dir} then follows by symmetry, $V(x-y) = V(y-x)$. 

   For \eqref{eq:diff:ex} we follow a similar strategy and first split
   \begin{align*}
     &V(x-y) \left[
       |(\sigma+\delta)(x,y)|^2 - |\sigma(x,y)|^2 \right] \\
     &\qquad= V(x-y) \left[ \overline{(\sigma+\delta)(x,y)}\delta(x,y)
       +
       \overline{\delta(x,y)}\sigma(x,y) \right] \\
     &\qquad\leq |V(x-y)|\left[ |(\sigma+\delta)(x,y)| \,
       |\delta(x,y)| + |\delta(x,y)| \,|\sigma(x,y)| \right].
   \end{align*}
   Applying to $\sigma, \delta$, and $ \sigma+\delta$ the fact that for
   positive trace class operators $a$ its kernel satisfies $$|a(x,y)|
   \leq \sqrt{|a(x,x)|}\sqrt{|a(y,y)|}\,,$$ together with the Cauchy-Schwarz
   inequality, we obtain the stated inequality.
 \end{proof}
 By applying Lemma~\ref{lemma:diff:dir_ex} to $\sigma+\delta =
 \gamma_\psi$ and $\sigma = \overline{\alpha_\psi}\alpha_\psi$ the
 differences \eqref{eq:A_ex} and \eqref{eq:A_dir} can be bounded by
 \begin{equation}
   \label{eq:dir_ex:bound}
   \begin{split}
     & 4(1+h^{1/2}) \int_{\mathbb{R}^6} \left|V\big((x-y)/h\big)\right|
     \gamma_\psi(x,x)
     (\overline{\alpha_\psi}\alpha_\psi\overline{\alpha_\psi}\alpha_\psi)(y,y)
     \ddd{3}x \ddd{3}y \\
     &\qquad\lesssim  \|\gamma_\psi(\cdot,\cdot)\|_\infty
     \int_{\mathbb{R}^6}
     (\overline{\alpha_\psi}\alpha_\psi\overline{\alpha_\psi}\alpha_\psi)(x,x)\left|
       V\big((x-y)/h\big) \right| \ddd{3}x\ddd{3}y
     \\
     &\qquad \lesssim h^3 \|V\|_1
     \|\gamma_\psi(\cdot,\cdot)\|_\infty
     \tr(\overline{\alpha_\psi}\alpha_\psi\overline{\alpha_\psi}\alpha_\psi)
     \lesssim h^2,
   \end{split}
 \end{equation}
 where we used \eqref{eq:gamma_psi_diag} and \eqref{eq:alpha_4} in the last step.

 In order to recover the $\| \psi\|_4^4$ contribution we inspect the
 remaining parts of the direct and the exchange terms separately. We
 begin with the exchange term and write explicitly
 \begin{align*}
   &-\frac{1}{2}
   \int_{\mathbb{R}^6}|(\alpha_\psi\overline{\alpha_\psi})(x,y)|^2
   V\big((x-y)/h\big)\ddd{3}x \ddd{3}y \\
   &\qquad= - \frac{1}{2}\int_{\mathbb{R}^{12}}\alpha_\psi(x,z)
   \overline{\alpha_\psi(z,y)} \alpha_\psi(x,w)
   \overline{\alpha_\psi(w,y)} V\big((x-y)/h\big)\ddd{3}x
   \ddd{3}y\ddd{3}z \ddd{3}w.
 \end{align*}

 Introducing new variables
 \begin{equation*}
   X = \frac{x+y}{2}, \,\,  r = x-y,\,\,   s = x-z, \,\,   t = x-w,
 \end{equation*}
 and rescaling $r/h \to r, s/h \to s, t/h \to t$,
 the last expression becomes
 \begin{align*}
   &  - \frac{h}{2}\int_{\mathbb{R}^{12}} V(r)
   \alpha_0(s)\overline{\alpha_0(r-s)}
   \alpha_0(t)\overline{\alpha_0(r-t)}\\
   &\qquad\quad \times \psi(X+h(r-s)/2) \overline{\psi(X-h s/2)}
   \psi(X-ht/2)\overline{\psi(X+h(r-t)/2)} \ddd{3}X \ddd{3}r\ddd{3}s
   \ddd{3}t\,.
 \end{align*}
The latter equals  
$$
   \frac h2 g_{\rm ex}\,\int_{\mathbb{R}^3} |\psi(x)|^4 \ddd{3}x +
     A_{\rm ex},
 $$
 where
 \begin{align*}
   A_{\rm ex} &= - \frac{h}{2}\int_{\mathbb{R}^{12}}  \ddd{3}X
   \ddd{3}r\ddd{3}s \ddd{3}t \,
   V(r)\alpha_0(s)\overline{\alpha_0(r-s)}
   \overline{\alpha_0(t)}\alpha_0(r-t) \times\\
   &\qquad \times \int_0^1 \frac{\dd}{\dd \tau} \left(\psi(X+\tau
     h(r-s)/2) \overline{\psi(X-\tau h s/2)} \psi(X-\tau h
     t/2)\overline{\psi(X+\tau h(r-t)/2)}\right) \dd{\tau}\,.
 \end{align*}
 This can be bounded by
 \begin{equation*}
   |A_{\rm ex} | \lesssim  h^2 \|\nabla \psi\|_2\|\psi\|_6^3 \|V (\alpha_0*\alpha_0)((|\cdot| \alpha_0)*\alpha_0)\|_1
   \lesssim
    h^2 \|\nabla \psi\|_2^4 \|V\|_1 \|\alpha_0\|_2^3 \big\||\cdot|\alpha_0\big\|_2,
 \end{equation*}
 using the H\"older, Sobolev  and Cauchy-Schwarz inequalities.  This shows \eqref{eq:upper_bound:ex}.

  We
 continue with the direct term.  Its remaining part is given by
 \begin{align*}
   &\int_{\mathbb{R}^6}
   (\alpha_\psi\overline{\alpha_\psi})(x,x)(\alpha_\psi\overline{\alpha_\psi})(y,y)
   V\big(({x-y})/{h}\big)\ddd{3}x \ddd{3}y \\
   &\qquad= \int_{\mathbb{R}^6}
   |\alpha_\psi(x,z)|^2|\alpha_\psi(y,w)|^2
   V\big((x-y)/h\big)\ddd{3}x \ddd{3}y\ddd{3}w \ddd{3}z \\
   &\qquad= h\int_{\mathbb{R}^{12}} V(r)
   |\alpha_0(s)|^2|\alpha_0(t)|^2 |\psi(X+h(r-s)/2)|^2
   |\psi(X-h(r+t)/2)|^2 \ddd{3}X \ddd{3}r \ddd{3}s \ddd{3}t,
 \end{align*}
 where we changed to the variables
 \begin{equation*}
   X = \frac{x+y}{2},\,\,   r = x-y,\,\,   s = x-z, \,\,  t = y-w,
 \end{equation*}
 and rescaled $r, s, t$.
By proceeding as above, we see that this expression equals
 \begin{equation}
   \label{eq:dir:reduced}
    \frac h2 g_{\rm dir}\, \int_{\mathbb{R}^3} |\psi(x)|^4 \ddd{3}x
     +A_{\rm dir},
 \end{equation}
 where
 \begin{align*}
   A_{\rm dir} &= h\int_{\mathbb{R}^{12}} V(r)
   |\alpha_0(s)|^2|\alpha_0(t)|^2\times\\
   &\qquad \times \int_0^1 \frac{\dd}{\dd \tau} \left( |\psi(X+\tau
     h(r-s)/2)|^2 |\psi(X-\tau h(r+t)/2)|^2 \right) \dd{\tau} \ddd{3}X
   \ddd{3}r \ddd{3}s \ddd{3}t
 \end{align*}
 is bounded by
 \begin{align*}
   |A_{\rm dir}| &\lesssim h^2 \|\nabla \psi\|_2 \|\psi\|_6^3\,
   \|\alpha_0\|_2\, \big(\||\cdot| V\|_1 \|\alpha_0\|_2 + \|V\|_1
   \|\sqrt{|\cdot| } \alpha_0\|_2\big).
 \end{align*}
 This shows \eqref{eq:upper_bound:dir}, and thus concludes the proof of the upper bound. 
 \section{Lower bound}
 \label{sec:lower_bound}

 Our proof of the lower bound on $E^{\rm BHF}(h)$ in
 Theorem~\ref{thm:GP} consists of two parts.  As a first step we obtain
 a priori bounds on approximate ground states.

 \begin{lemma}[A priori bounds]
   \label{lemma:apriori}
   Let $\Gamma$ be a state satisfying  $\tr \gamma = 1/h$ and 
   \begin{equation*}
     \mathcal{E}^{\rm
       BHF}(\Gamma) \leq -E_b\frac{1}{2h} + Ch
   \end{equation*}
   for some $C>0$. Define the function $\Psi$ as
    \begin{equation}\label{dpsi}
     \psi(X) = \frac{1}{h}\int_{\mathbb{R}^3}
     \alpha_0(r/h)\alpha(X+ r/2,X- r/2) \ddd{3}r\,,
   \end{equation}
  and define $\widetilde{\xi}(X,r) = \xi(X+r/2, X-r/2)$ through the decomposition 
  $$\widetilde{\alpha}(X,r) := \alpha(X+r/2,X-r/2)=   h^{-2} \psi(X)\alpha_0(r/h) +
  \widetilde{\xi}(X,r)\,.$$
  Then these functions satisfy the bounds
  \begin{subequations}
    \begin{center}
      \begin{minipage}{.45\linewidth}
        \begin{align}
          &\tr \big[ (-h^2\Delta + \tfrac 12 E_b) (\gamma -
          \alpha\overline{\alpha})\big]
          \lesssim h,\label{eq:apriori:Delta_gamma_2}\\
          &\tr (\gamma^2) \leq \tr(\gamma - \alpha\overline{\alpha})
          \lesssim h, \label{eq:apriori:gamma_2}\\
          &\|\psi\|_2 \leq 1,\label{eq:apriori:psi} \\
          &\|\nabla \psi\|_2 \lesssim 1,\label{eq:apriori:nabla_psi}
        \end{align}
      \end{minipage}
      \begin{minipage}{.45\linewidth}
        \begin{align}
          &\|\widetilde{\xi}\|_2 \lesssim h^{1/2},\label{eq:apriori:xi}\\
          \label{eq:apriori:nabla_xi}
          &\|\nabla_X\widetilde{\xi}\|_2 \lesssim h^{-1/2}, \\
          \label{eq:apriori:nabla_r_xi}
          &\|\nabla_r \widetilde{\xi}\|_2 \lesssim h^{-1/2}, \\
          \label{eq:apriori:alpha_4}
          &\tr (\alpha\overline{\alpha}\alpha\overline{\alpha})
            \lesssim h.
        \end{align}
      \end{minipage}
    \end{center}
  \end{subequations}
\end{lemma}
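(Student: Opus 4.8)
The plan is to extract all the stated bounds from the single energy assumption $\mathcal E^{\rm BHF}(\Gamma) \le -E_b/(2h) + Ch$ by a sequence of coercivity estimates, starting from the fact that the direct and exchange contributions are non-negative (Section~\ref{sec:stability}), so that they can simply be dropped in a lower bound. Thus the hypothesis already gives
\[
  \tr(-h^2\Delta + h^2 W)\gamma + \tfrac12\int V\bigl((x-y)/h\bigr)|\alpha(x,y)|^2 \le -\frac{E_b}{2h} + Ch .
\]
First I would add and subtract $E_b/2$ per particle, i.e.\ use $\tr\gamma = 1/h$ to rewrite the left side as $\tr(-h^2\Delta + h^2 W + E_b/2)\gamma + \tfrac12\int V|\alpha|^2 - E_b/(2h)$. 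The next step is the key operator inequality: since $\alpha$ is a Hilbert--Schmidt operator and $-2\Delta + V \ge -E_b$ (Assumption~\ref{asm:V}) acting in the relative coordinate, one has, after passing to center-of-mass and relative variables and using symmetry of $\alpha$ to replace $\Delta_x$ by $\tfrac12(\Delta_x+\Delta_y) = \tfrac14\Delta_X + \Delta_r$,
\[
  \tfrac12\!\int\! V\bigl((x-y)/h\bigr)|\alpha(x,y)|^2
  \ge -h^2\!\int\!|\nabla_r\widetilde\alpha|^2 - \tfrac{E_b}{2}\|\alpha\|_2^2
  + \tfrac{h^2}{4}\!\int\!|\nabla_X\widetilde\alpha|^2 \,,
\]
so that the quadratic part of the energy controls $\tfrac{h^2}{4}\|\nabla_X\widetilde\alpha\|_2^2$ and $-h^2\Delta_x$ acting on $\gamma - \alpha\overline\alpha$ plus $\tfrac{E_b}{2}(\gamma - \alpha\overline\alpha)$; this last point uses $0\le\Gamma\le1$, which forces $\gamma - \alpha\overline\alpha = \gamma - \gamma^2 + (\gamma^2 - \alpha\overline\alpha) \ge \gamma^2 - \alpha\overline\alpha \ge 0$ (from the off-diagonal block of $\Gamma(1-\Gamma)\ge0$) and also $\gamma^2 \le \gamma - \alpha\overline\alpha$. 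Absorbing $h^2\|W\|_\infty/h = O(h)$ into the right side then yields \eqref{eq:apriori:Delta_gamma_2} and \eqref{eq:apriori:gamma_2}, and \eqref{eq:apriori:alpha_4} follows since $\tr(\alpha\overline\alpha\alpha\overline\alpha) = \tr((\alpha\overline\alpha)^2) \le \|\alpha\overline\alpha\|_\infty\tr(\alpha\overline\alpha)$ combined with $\tr(\gamma-\alpha\overline\alpha)\lesssim h$ controlling $\tr(\gamma^2)\ge\tr((\alpha\overline\alpha)^2)$ — or more directly, $(\alpha\overline\alpha)^2 \le \gamma\alpha\overline\alpha \le \ldots$; I would work out the cleanest chain here.

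Next I would analyze the decomposition $\widetilde\alpha(X,r) = h^{-2}\psi(X)\alpha_0(r/h) + \widetilde\xi(X,r)$, where $\psi$ is \emph{defined} by the projection \eqref{dpsi} onto $\alpha_0$ in the relative variable, so that by construction $\int \alpha_0(r/h)\widetilde\xi(X,r)\dd^3r = 0$ for every $X$, i.e.\ $\widetilde\xi$ is fiberwise orthogonal to $\alpha_0(\cdot/h)$. The spectral gap of $-2\Delta+V$ above $-E_b$ — or rather just the strict inequality $\langle f, (-2\Delta+V+E_b)f\rangle \ge 0$ with equality only on multiples of $\alpha_0$, together with a gap $\delta>0$ to the rest of the spectrum, which I should check is available (if $-2\Delta+V$ has essential spectrum starting at $0 > -E_b$ there is an automatic gap) — gives $\int V\bigl((x-y)/h\bigr)|\widetilde\xi|^2 + 2h^2\|\nabla_r\widetilde\xi\|_2^2 \ge (-E_b + \delta)\|\widetilde\xi\|_2^2$ fiberwise. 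Feeding this back into the energy bound, the $\psi$-part contributes exactly $-E_b\|\psi\|_2^2$ worth of relative energy (up to the $\tfrac{h^2}4\|\nabla\psi\|_2^2$ from the center-of-mass Laplacian, which is $O(h)$ once we know $\|\nabla\psi\|_2\lesssim1$), and cross terms vanish by orthogonality, so the surplus $\delta\|\widetilde\xi\|_2^2$ plus $\tfrac{h^2}4\|\nabla_X\widetilde\xi\|_2^2$ plus $\tfrac{h^2}4\|\nabla\psi\|_2^2$ is bounded by $Ch + O(h)$. This delivers \eqref{eq:apriori:xi}, \eqref{eq:apriori:nabla_psi}, and the $X$-derivative bound \eqref{eq:apriori:nabla_xi}; the bound \eqref{eq:apriori:nabla_r_xi} on $\|\nabla_r\widetilde\xi\|_2\lesssim h^{-1/2}$ comes from $h^2\|\nabla_r\widetilde\xi\|_2^2$ being controlled after moving the (bounded) potential $V$ to the other side at cost $\|V\|_\infty\|\widetilde\xi\|_2^2 \lesssim h$. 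Finally \eqref{eq:apriori:psi}, $\|\psi\|_2\le1$, follows from $\|\psi\|_2^2 = \langle \widetilde\alpha, P_0\widetilde\alpha\rangle$-type identity combined with $\|\alpha\|_2^2 = h^{-4}\cdot h^3\|\psi\|_2^2 + \ldots$; more precisely $h^2\|\psi\|_2^2 \le h\tr(\gamma)\cdot(\text{something})$, but the clean route is: $\tr\gamma = 1/h$, $\alpha\overline\alpha \le \gamma$, $\tr(\alpha\overline\alpha) \le 1/h$, and $\tr(\alpha\overline\alpha) = \|\alpha\|_2^2 = h^{-1}\|\psi\|_2^2 + \|\widetilde\xi\|_2^2/h^{?}$ — I would pin down the exact normalization so that $\|\psi\|_2^2 \le 1 + O(h)$ and then argue $\le 1$ (or accept $\le 1 + O(h)$, which is all that is used downstream; if the statement insists on $\le 1$ I would note that one may as well normalize).

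The main obstacle I anticipate is bookkeeping the normalization and the interplay of the three quadratic forms ($-h^2\Delta_X$, $-h^2\Delta_r$, and $V((x-y)/h)$) under the change of variables together with the constraint $0\le\Gamma\le1$: one must be careful that the $\gamma - \alpha\overline\alpha \ge 0$ piece and the $\alpha\overline\alpha$ piece of $\tr(-h^2\Delta+E_b/2)\gamma$ are separated cleanly, since $\tr(-h^2\Delta)\alpha\overline\alpha$ is \emph{not} the same as $\int h^2|\nabla_x\widetilde\alpha|^2$ only after symmetrization, and the $E_b/2$ shift must be distributed so that both the relative Schrödinger operator on $\widetilde\alpha$ \emph{and} the leftover $(-h^2\Delta+E_b/2)(\gamma-\alpha\overline\alpha)$ are manifestly non-negative. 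The secondary subtlety is justifying the fiberwise spectral gap uniformly in $h$ (the rescaled operator is $-2\Delta_r + V(r)$ acting on $L^2(\dd^3 r)$ for each fixed $X$, so there is no $h$-dependence in the operator itself — good — but the orthogonality of $\widetilde\xi(X,\cdot)$ to $\alpha_0(\cdot/h)$ versus $\alpha_0(\cdot)$ after rescaling must be tracked, which is routine but must be done correctly). Everything else is Hölder, Sobolev and Cauchy--Schwarz of the type already displayed in Section~\ref{sec:upper_bound}.
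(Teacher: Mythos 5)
Your proposal follows essentially the same route as the paper: drop the (non-negative) direct and exchange terms via the stability argument, split the kinetic energy into center-of-mass and relative parts, use that $\alpha_0$ is the zero mode of $-\Delta+\tfrac12 V+\tfrac12 E_b$ together with the fiberwise orthogonality of $\widetilde{\xi}$, get positivity of $\gamma-\gamma^2-\alpha\overline{\alpha}$ from $0\le\Gamma\le1$, invoke the spectral gap for $\|\widetilde{\xi}\|_2$, and use Cauchy--Schwarz (equivalently the Pythagorean decomposition with $\|\alpha\|_2^2\le\tr\gamma$) for $\|\psi\|_2\le1$; your variants for $\|\nabla_r\widetilde{\xi}\|_2$ (moving $\|V\|_\infty\|\widetilde{\xi}\|_2^2\lesssim h$ to the other side rather than the paper's $\varepsilon$-splitting of the kinetic term in the gap inequality) and for $\tr(\alpha\overline{\alpha}\alpha\overline{\alpha})\le\tr(\gamma^2)$ are equally valid. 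Only two writing slips, neither fatal: the displayed "key operator inequality" is false as literally stated (the term $\tfrac{h^2}{4}\|\nabla_X\widetilde{\alpha}\|_2^2$ must come from the kinetic energy on the left side, as your surrounding prose correctly describes), and $\gamma^2\ge\alpha\overline{\alpha}$ does not hold in general --- the correct and sufficient fact, which you also state, is $\gamma-\gamma^2\ge\alpha\overline{\alpha}$, coming from the diagonal block of $\Gamma(1-\Gamma)\ge0$.
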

Note that our definition implies that  $\widetilde \xi(X,\,\cdot\,)$ is orthogonal to $  \alpha_0(\,\cdot\,/h)$ for almost all $X$. The norms in \eqref{eq:apriori:xi}--\eqref{eq:apriori:nabla_r_xi} are in $L^2(\mathbb{R}^6)$.
\begin{proof}
  We have seen in Section~\ref{sec:stability} that the sum of the direct and
  exchange terms is non-negative. Consequently,
  \begin{align*}
     h &\gtrsim \mathcal{E}^{\rm BHF}(\Gamma) + E_b  \frac{1}{2h}     \\
    &\geq \tr (-h^2 \Delta +E_b/2)\gamma + \frac{1}{2}
    \int_{\mathbb{R}^6} V\big((x-y)/h\big) |\alpha(x,y)|^2 \ddd{3} x
    \ddd{3}y -h^2\|W\|_\infty \tr(\gamma).
  \end{align*}
  We bring the term $h^2\|W\|_\infty \tr(\gamma) \lesssim h $ to the left
   side.  Adding and subtracting the  expression $\tr (-h^2 \Delta
    +E_b/2)\alpha\overline{\alpha}$ we obtain
  \begin{equation}\label{weo} 
     h \gtrsim \tr (-h^2 \Delta +E_b/2)
    (\gamma-\alpha\overline{\alpha}) + \tr (-h^2 \Delta
    +E_b/2)\alpha\overline{\alpha} + \frac{1}{2} \int_{\mathbb{R}^6}
    V\big((x-y)/h\big) |\alpha(x,y)|^2 \ddd{3} x \ddd{3}y.
  \end{equation}
  The last two terms on the right side can be
  expressed via center-of-mass and relative coordinates as
  \begin{equation}
    \begin{split}
      \label{calxi}
      &\int_{\mathbb{R}^3} \left \langle \alpha(\cdot, y),
        \left[-h^2\Delta_x + \frac{1}{2}V\Bigl(\frac{\cdot
            -y}{h}\Bigr) +\frac{E_b}{2}\right] \alpha(\cdot ,y)\right
      \rangle_{L^2(\mathbb{R}^3)} \ddd{3}y
      \\
      &\qquad= \left \langle \alpha ,
        \left[-\frac{h^2}{4}\Delta_X -h^2 \Delta_r +
          \frac{1}{2}V(r/h) +\frac{E_b}{2} \right] \alpha\right
      \rangle_{L^2(\mathbb{R}^6)} \\
      &\qquad= \frac{h}{4} \int_{\mathbb{R}^3} |\nabla \psi(X)|^2
      \ddd{3} X + \frac{h^2}4  \|\nabla_X \widetilde{\xi}\|_2^2 \\ & \quad\qquad  + 
      \int_{\mathbb{R}^3} \left \langle \widetilde{\xi}(X,\cdot),
      \left[ -h^2\Delta_r + \frac{1}{2} V(\cdot/h) +
      E_b/2 \right] \widetilde{\xi}(X,\cdot)\right\rangle_{L^2(\mathbb{R}^3)}  \ddd{3} X,
    \end{split}
  \end{equation}
  where we used that $\alpha_0$ is the normalized  zero energy eigenvector
of the operator $-\Delta + V/2 + E_b/2$, as well as  the fact that
  $\widetilde{\xi}(X,\cdot)$ is orthogonal to $\alpha_0(\cdot/h)$ for almost
  every $X \in\mathbb{R}^3$.  Hence \eqref{weo} implies
  \begin{multline} \label{combw}
    h \gtrsim
     \tr (-h^2 \Delta +E_b/2)
    (\gamma-\alpha\overline{\alpha})     + \frac{h}{4} \|\nabla \psi\|_2^2
    + \frac{h^2}4 \|\nabla_X \widetilde{\xi}\|_2^2\\
    + \int_{\mathbb{R}^3} \langle
    \widetilde{\xi}(X,\cdot), (-h^2\Delta + \frac{1}{2} V(\cdot/h)+
    E_b/2)\widetilde{\xi}(X,\cdot)\rangle \ddd{3} X.
  \end{multline}
  Since all terms on the right  side are non-negative, and $\gamma -
  \gamma^2 \geq \alpha \bar \alpha $, we immediately obtain the estimates
  \eqref{eq:apriori:Delta_gamma_2}, \eqref{eq:apriori:gamma_2}, \eqref{eq:apriori:nabla_psi} and 
  \eqref{eq:apriori:nabla_xi}.

According to
  Assumption \ref{asm:V} the operator $-\Delta + V/2$ has a spectral gap
   between the ground state energy $-E_b/2$ and the next 
  eigenvalue.  This implies that we can find a $\kappa > 0$ and an $\varepsilon > 0$
  such that
$$-(1-\varepsilon)\Delta + V/2 + E_b/2  \geq \kappa$$
on the orthogonal complement of $\alpha_0$. Hence
\begin{equation*}
  \int_{\mathbb{R}^3} \langle \widetilde{\xi}(X,\cdot),(-h^2
  \Delta_r + \frac{1}{2} V(\cdot/h) + E_b/2)
  \widetilde{\xi}(X,\cdot) \rangle \ddd{3}X
  \geq \kappa \| \widetilde{\xi} \|_2^2
  + \varepsilon h^2 \| \nabla_r\widetilde{\xi} \|_2^2\,.
\end{equation*}
In combination with \eqref{combw} this yields the estimates \eqref{eq:apriori:xi} and \eqref{eq:apriori:nabla_r_xi}.  

Since $\|{\alpha}\|^2_2 \leq
 \tr \gamma = 1/h$ we obtain for the $L^2$-norm of $\psi$, that, by definition \eqref{dpsi} and the Cauchy-Schwarz inequality, 
\begin{equation}
  \label{eq:apriori:psi:derivation}
  \begin{split}
    \|\psi\|_2^2 &= h^{-2} \int_{\mathbb{R}^3} \alpha_0
    (r_1/h)\widetilde{\alpha}(X,r_2)\alpha_0({r_2}/{h})
    \overline{\widetilde{\alpha}(X,r_1)}
    \ddd{3}r_1\ddd{3}r_2\ddd{3}X \\
    &\leq h^{-2} \int_{\mathbb{R}^3} |\alpha_0({r_1}/{h})|^2
    |\widetilde{\alpha}(X,r_2)|^2 \ddd{3}r_1\ddd{3}r_2 \ddd{3}X = h
    \|\alpha\|_2^2 \leq 1,
  \end{split}
\end{equation}
implying \eqref{eq:apriori:psi}.  Finally, to see \eqref{eq:apriori:alpha_4} note that since $\gamma\geq \alpha\overline{\alpha}$
\begin{equation*}
  \tr(\alpha\overline{\alpha}\alpha\overline{\alpha}) =
  \tr\big(\gamma^2 - \gamma(\gamma-\alpha\overline{\alpha}) -
  (\gamma-\alpha\overline{\alpha})\gamma +
  (\gamma-\alpha\overline{\alpha})^2\big) 
  \leq \tr(\gamma^2) + \tr (\gamma-\alpha\overline{\alpha})^2  \lesssim h.
\end{equation*}
\end{proof}
Observe that we do not necessarily have $\|\psi\|_2^2 = 1$.  The norm
deviates from $1$ by a correction of order $h^2$,
\begin{equation}
  \label{eq:psi_constraint}
1 -  \|\psi\|_2^2 
  = h\left(   \tr \gamma -  \tr \alpha_{\psi}\overline{\alpha_{\psi}} \right)
  \leq  h \tr(\gamma -\alpha\overline{\alpha}) + h \left|
    \tr(\alpha_{\psi}\overline{\alpha_{\psi}} - \alpha\overline{\alpha}) \right| \,.\end{equation}
By \eqref{eq:apriori:gamma_2} and \eqref{eq:apriori:xi} (and the orthogonality of $\widetilde \xi$ and $\alpha_0$), the right
 side is $O(h^2)$.

With the aid of the function $\psi$  we can define a
corresponding state $\Gamma_\psi$ as in \eqref{def:alphapsi}--\eqref{eq:Gamma_trial}.  By multiplying $\psi$ with a factor $\lambda = 1 +O(h^2)$ we can assume 
that $\tr \gamma_{\lambda\psi} = 1/h$. 
The second step now consists of proving that for a lower bound we can
replace $\mathcal{E}^{\rm BHF}(\Gamma)$ by $\mathcal{E}^{\rm
  BHF}(\Gamma_{\lambda\psi})$ up to higher order terms. Together with the
calculations from the upper bound this implies the lower bound stated
in Theorem \ref{thm:GP}.
\begin{lemma}\label{lem4}
 With $\Gamma$ and $\Gamma_{\lambda\psi}$ defined as above, one
  has
  \begin{equation}
    \label{lowerbound}
    \mathcal{E}^{\rm
      BHF}(\Gamma) \geq \mathcal{E}^{\rm
      BHF}(\Gamma_{\lambda\psi})  -  O(h^{3/2}).
  \end{equation}
\end{lemma}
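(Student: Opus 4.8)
The plan is to compare $\mathcal{E}^{\rm BHF}(\Gamma)$ and $\mathcal{E}^{\rm BHF}(\Gamma_{\lambda\psi})$ term by term, exploiting the decomposition $\alpha = \alpha_\psi + \xi$ from Lemma~\ref{lemma:apriori} together with the a priori bounds $\|\xi\|_2 \lesssim h^{1/2}$, $\|\nabla_X\widetilde\xi\|_2 \lesssim h^{-1/2}$, $\|\nabla_r\widetilde\xi\|_2 \lesssim h^{-1/2}$, $\|\nabla\psi\|_2 \lesssim 1$, and the Corollary~\ref{cor:ineqgamma} bounds on $\alpha_\psi$. First I would treat the kinetic-plus-pairing part. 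Writing out $\tr(-h^2\Delta+E_b/2)\alpha\overline{\alpha} + \tfrac12\int V(\cdot/h)|\alpha|^2$ in center-of-mass and relative coordinates as in \eqref{calxi}, the cross terms between $h^{-2}\psi(X)\alpha_0(r/h)$ and $\widetilde\xi$ in the quadratic form vanish because $\widetilde\xi(X,\cdot)\perp\alpha_0(\cdot/h)$ and $\alpha_0$ is the zero mode of $-\Delta+V/2+E_b/2$; hence this piece is exactly $\frac h4\|\nabla\psi\|_2^2$ plus the manifestly nonnegative $\xi$-contribution, so it is bounded below by the corresponding $\alpha_\psi$ quantity $\frac h4\|\nabla\psi_\lambda\|_2^2$ up to the $O(h^2)$ error from the rescaling $\psi\to\lambda\psi$. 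The remaining discrepancy in the kinetic term is the difference between $\tr(-h^2\Delta+E_b/2)(\gamma-\alpha\overline\alpha)$ and $\tr(-h^2\Delta+E_b/2)\alpha_\psi\overline{\alpha_\psi}\alpha_\psi\overline{\alpha_\psi}$ (the $\gamma_{\lambda\psi}-\alpha_{\lambda\psi}\overline{\alpha_{\lambda\psi}}$ piece); the first is $\gtrsim 0$ and, via \eqref{eq:apriori:gamma_2}, $\gtrsim$ nothing we need to subtract, while the second is $O(h)\cdot(\text{leading GP term is }O(h))$—more precisely $\tr(-h^2\Delta)\alpha_{\lambda\psi}\overline{\alpha_{\lambda\psi}}\alpha_{\lambda\psi}\overline{\alpha_{\lambda\psi}} = \frac h2 g_{\rm bcs}\|\psi\|_4^4 + O(h^2)$ by \eqref{eq:Delta_alpha_psi_4}, so this replacement costs only $O(h)$ in the $\tfrac h2\mathcal E^{\rm GP}$ normalization, i.e.\ it is part of the GP functional itself rather than an error. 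I will need to argue that dropping $\tr(-h^2\Delta+E_b/2)(\gamma-\alpha\overline\alpha)\geq 0$ on the true state is legitimate for a lower bound (it is, since we are bounding $\mathcal E^{\rm BHF}(\Gamma)$ from below), while on the trial side the corresponding term is genuinely present and contributes the BCS part of $g$.

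Next I would handle the external potential: $h^2\tr W\gamma = h^2\tr W\alpha\overline\alpha + h^2\tr W(\gamma-\alpha\overline\alpha)$, where the second term is $\lesssim h^2\|W\|_\infty\tr(\gamma-\alpha\overline\alpha)\lesssim h^3$ by \eqref{eq:apriori:gamma_2}, and the first, using $\alpha=\alpha_\psi+\xi$, expands as $h^2\tr W\alpha_\psi\overline{\alpha_\psi}$ plus cross terms bounded by $h^2\|W\|_\infty\|\alpha_\psi\|_2\|\xi\|_2 \lesssim h^2\cdot h^{-1/2}\cdot h^{1/2} = O(h^2)$ (using $\|\alpha_\psi\|_2^2 = O(h^{-1})$) plus $h^2\tr W\xi\overline\xi \lesssim h^2\|\xi\|_2^2 = O(h^3)$; and $h^2\tr W\alpha_\psi\overline{\alpha_\psi} = h\int W|\psi|^2 + O(h^2)$ exactly as in the upper bound, Section~\ref{sec:upper_bound:W}. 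For the direct and exchange terms I would proceed in two substeps mirroring the upper-bound analysis: (a) replace $\gamma$ by $\alpha\overline\alpha$ at the cost controlled by Lemma~\ref{lemma:diff:dir_ex} applied with $\sigma+\delta=\gamma$, $\sigma=\alpha\overline\alpha$, giving an error $\lesssim \int|V(\cdot/h)|\,\gamma(x,x)(\overline\alpha\alpha\overline\alpha\alpha)(y,y) \lesssim h^3\|V\|_1\|\gamma\|_\infty\tr(\alpha\overline\alpha\alpha\overline\alpha)$—here I need an operator-norm bound $\|\gamma\|_\infty \lesssim h$ on the true approximate minimizer, which does not follow from Lemma~\ref{lemma:apriori} directly and will require an extra argument (e.g.\ from $\gamma \leq \gamma - \gamma^2 + \gamma^2$ and $\gamma-\gamma^2\geq\alpha\overline\alpha$ plus a bound on $\|\alpha\overline\alpha\|_\infty$ via $\|\alpha\|_\infty$, which in turn needs a Sobolev-type estimate on $\alpha$ using \eqref{eq:apriori:nabla_r_xi}); alternatively one bounds the error more crudely using only $\tr\gamma = h^{-1}$ and $L^1$-$L^\infty$ interpolation on $V$; (b) replace $\alpha\overline\alpha$ by $\alpha_\psi\overline{\alpha_\psi}$, expanding $\alpha\overline\alpha - \alpha_\psi\overline{\alpha_\psi} = \alpha_\psi\overline\xi + \xi\overline{\alpha_\psi} + \xi\overline\xi$ and estimating each resulting direct/exchange contribution using the pointwise bounds $(\sigma\alpha_\psi)(x,x)\lesssim h^{-1}\|\sigma(\cdot,x)\|_2\|\nabla\psi\|_2\|\alpha_0\|_3$ from \eqref{twonorm} (with $\sigma$ built from $\xi$), Young's inequality for the $V(\cdot/h)$ convolution which produces a factor $h^3\|V\|_1$, and the a priori bounds on $\xi$; the target is that every such term is $O(h^{3/2})$ after multiplication, i.e.\ small compared to $h\cdot O(1)$.

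The main obstacle, and where most of the work lies, is substep (b) for the direct and exchange terms: one must show that the cross terms $\alpha_\psi\overline\xi + \xi\overline{\alpha_\psi}$ inserted into the quartic expressions $\int V(\cdot/h)\,|\cdot(x,y)|^2$ and $\int V(\cdot/h)\,\cdot(x,x)\cdot(y,y)$ contribute only $O(h^{3/2})$. The naive bound using $\|\xi\|_2\lesssim h^{1/2}$ alone is not enough because the diagonal values $(\xi\overline\xi)(x,x)$ and $(\alpha_\psi\overline\xi)(x,x)$ are not controlled by $L^2$-norms of the kernels; one genuinely needs the gradient bounds \eqref{eq:apriori:nabla_X_xi}–\eqref{eq:apriori:nabla_r_xi} together with a Sobolev inequality in $\mathbb R^6$ to control $\sup_x(\xi\overline\xi)(x,x)$, combined with the rescaled structure $\alpha_0(r/h)$ which concentrates the relative variable on scale $h$ and thus produces the crucial powers of $h$ when integrated against $V((x-y)/h)$. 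Concretely I expect to use $\sup_x \|\xi(\cdot,x)\|_2^2 = \sup_X\int|\widetilde\xi(X,r)|^2\,dr$ controlled by $\|\widetilde\xi\|_2\|\nabla_X\widetilde\xi\|_2\lesssim h^{1/2}\cdot h^{-1/2} = O(1)$ uniformly, and then \eqref{twonorm} with $\sigma$ an operator whose kernel involves $\xi$, giving factors of $h^{-1}\|\xi\|_2\|\nabla\psi\|_2$; tracking the h-powers carefully through the convolution with $V(\cdot/h)$ (which I expect to contribute $h^3$) and through the two factors of $\alpha_0(\cdot/h)$ is the delicate bookkeeping. Once both substeps are controlled and assembled with the kinetic and external-potential estimates above, adding up gives $\mathcal E^{\rm BHF}(\Gamma) \geq \mathcal E^{\rm BHF}(\Gamma_{\lambda\psi}) - O(h^{3/2})$, which is \eqref{lowerbound}; uniformity in $\psi$ follows because all constants depend on $\psi$ only through $\|\psi\|_{H^1}$, which is $\lesssim 1$ by Lemma~\ref{lemma:apriori}.
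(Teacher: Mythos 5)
Your overall skeleton (decompose $\alpha=\alpha_\psi+\xi$, compare term by term, use Lemma~\ref{lemma:diff:dir_ex}, \eqref{twonorm} and the a priori bounds) matches the paper, but your treatment of the kinetic-plus-pairing part has a gap that would sink the proof. You propose to discard $\tr\big[(-h^2\Delta+E_b/2)(\gamma-\alpha\overline{\alpha})\big]\geq 0$ on the true state and to regard the trial state's quartic piece $\tr\big[(-h^2\Delta+E_b/2)\,\alpha_{\lambda\psi}\overline{\alpha_{\lambda\psi}}\alpha_{\lambda\psi}\overline{\alpha_{\lambda\psi}}\big]=\tfrac h2 g_{\rm bcs}\|\psi\|_4^4+O(h^2)$ as ``part of the GP functional rather than an error''. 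But that term is positive and of order $h$ --- the same order as the GP energy you are comparing at --- while the tolerated error in \eqref{lowerbound} is only $O(h^{3/2})$. Once you have thrown away \emph{all} of $\tr H(\gamma-\alpha\overline{\alpha})$ on the $\Gamma$ side, nothing is left there to match this term, so your argument yields at best $\mathcal{E}^{\rm BHF}(\Gamma)\geq\mathcal{E}^{\rm BHF}(\Gamma_{\lambda\psi})-\tfrac h2 g_{\rm bcs}\|\psi\|_4^4-O(h^{3/2})$, which is weaker than the lemma by an $O(h)$ amount (and $g_{\rm bcs}>0$). The paper's proof hinges on the exact identity $\gamma=\alpha\overline{\alpha}+\alpha\overline{\alpha}\alpha\overline{\alpha}+(\gamma-\alpha\overline{\alpha}-\gamma^2)+(\gamma-\alpha\overline{\alpha})^2+\alpha\overline{\alpha}(\gamma-\alpha\overline{\alpha})+(\gamma-\alpha\overline{\alpha})\alpha\overline{\alpha}$: only the two manifestly nonnegative middle pieces are dropped, the retained quartic piece is compared to the trial one via \eqref{eq:diff:alpha_4} (a Schatten--H\"older estimate with $H^{1/2}=(-h^2\Delta+E_b/2)^{1/2}$, cf.\ \eqref{eq:H_alpha_6}--\eqref{eq:d_alpha_2}), and the cross terms $\alpha\overline{\alpha}(\gamma-\alpha\overline{\alpha})+{\rm h.c.}$ are shown to be $O(h^{3/2})$ in \eqref{eq:lower_bound:H:R1} using $\tr H(\gamma-\alpha\overline{\alpha})\lesssim h$ together with $\|H^{-1/2}\alpha\overline{\alpha}H^{1/2}\|_\infty\lesssim h^{1/2}$. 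None of this machinery appears in your sketch, and without it the kinetic comparison does not close.

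In the direct/exchange part, step (a) of your plan also stalls where you yourself flag it: you need $\|\gamma\|_\infty\lesssim h$ for the approximate minimizer, which Lemma~\ref{lemma:apriori} does not give, and neither of your fallbacks works (the crude bound via $\tr\gamma=1/h$ only yields $O(1)$, and an operator-norm bound on $\gamma$ through $\|\alpha\|_\infty$ is neither established nor needed). Note also that your error kernel should be $(\gamma-\alpha\overline{\alpha})(y,y)$, not $(\overline{\alpha}\alpha\overline{\alpha}\alpha)(y,y)$ --- the latter belongs to the upper bound, where $\delta=(1+h^{1/2})(\alpha_\psi\overline{\alpha_\psi})^2$. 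The paper circumvents the $\|\gamma\|_\infty$ issue entirely by further splitting $\gamma(x,x)=(\gamma-\alpha\overline{\alpha})(x,x)+(\alpha\overline{\alpha})(x,x)$ and decomposing $\alpha\overline{\alpha}=\alpha_\psi\overline{\alpha_\psi}+\xi\overline{\alpha_\psi}+\alpha_\psi\overline{\xi}+\xi\overline{\xi}$, so that only trace bounds, \eqref{eq:sup_alpha2}, and \eqref{twonorm} with Cauchy--Schwarz are used. For step (b), the estimate you propose to rely on, $\sup_X\int|\widetilde{\xi}(X,r)|^2\dd r\lesssim\|\widetilde{\xi}\|_2\|\nabla_X\widetilde{\xi}\|_2$, is the one-dimensional Gagliardo--Nirenberg trace bound and is false for $X\in\mathbb{R}^3$; it is also unnecessary, since the paper never takes a supremum over $\xi$-diagonals but keeps $\|\xi(\cdot,x)\|_2$ inside the integral and applies Cauchy--Schwarz against $V(\cdot/h)$, while the exchange comparison in step (b) is done by the simple Schatten bound $\|V\|_\infty\|\alpha\overline{\alpha}-\alpha_\psi\overline{\alpha_\psi}\|_2\,\|\alpha\overline{\alpha}+\alpha_\psi\overline{\alpha_\psi}\|_2\lesssim h^{3/2}$.
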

Lemma~\ref{lem4} not only completes the proof of the lower bound \eqref{eq:lower_bound}, it also allows to establish the claim about approximate minimizers in Eqs.~\eqref{apprmin}--\eqref{eq:approx_minimizer} in Theorem~\ref{thm:GP}. Given an approximate minimizer satisfying \eqref{apprmin}, Lemma~\ref{lemma:apriori} yields \eqref{eq:alpha_psi:0}, while a combination of \eqref{lowerbound} and \eqref{eq:BHFequGP} implies \eqref{eq:approx_minimizer}.

It remains to prove the bound \eqref{lowerbound}, which is an immediate consequence of the  following estimates: 
 \begin{subequations}
\begin{align}
\begin{split}
        &\tr (-h^2 \Delta +E_b/2)\gamma + \frac{1}{2} \int_{\mathbb{R}^6}
        V\big((x-y)/h\big)
        |\alpha(x,y)|^2 \ddd{3} x \ddd{3}y\\
        &\qquad \qquad \geq (-h^2 \Delta +E_b/2)\gamma_{\lambda\psi} + \frac{1}{2} \int_{\mathbb{R}^6}
        V\big((x-y)/h\big) |\alpha_{\lambda\psi}(x,y)|^2 \ddd{3} x \ddd{3}y -
        O(h^{3/2})\label{eq:lower_bound:H}
\end{split}
\end{align}
\begin{equation} \label{eq:lower_bound:W} 
h^2\tr W\gamma \geq h^2 \tr W\gamma_{\lambda\psi} - O(h^{2})
\end{equation}
\begin{equation}
- \int_{\mathbb{R}^6}|\gamma(x,y)|^2
      V\big((x-y)/h\big)\ddd{3}x \ddd{3}y \geq -
      \int_{\mathbb{R}^6}|\gamma_{\lambda\psi}(x,y)|^2
      V\big((x-y)/h\big)\ddd{3}x \ddd{3}y - O(h^{3/2})
      \label{eq:lower_bound:ex}
\end{equation}
\begin{equation}
\int_{\mathbb{R}^6} \gamma(x,x)\gamma(y,y)
      V\big((x-y)/h\big)\ddd{3}x \ddd{3}y \geq \int_{\mathbb{R}^6}
      \gamma_{\lambda\psi}(x,x)\gamma_{\lambda\psi}(y,y) V\big((x-y)/h\big)\ddd{3}x
      \ddd{3}y - O(h^{3/2}).\label{eq:lower_bound:dir}
\end{equation}
  \end{subequations}
  The remainder of this section will be dedicated to proving these estimates.

\subsection{Kinetic and potential energy (Proof of
  \eqref{eq:lower_bound:H})}
\label{sec:lower_bound:H}
Let us  decompose $\gamma$ according to
  \begin{equation*}
    \gamma = \alpha\overline{\alpha} +
    \alpha\overline{\alpha}\alpha\overline{\alpha}
    + (\gamma - \alpha\overline{\alpha} - \gamma^2) +
    (\gamma-\alpha\overline{\alpha})^2
    + \alpha\overline{\alpha}(\gamma-\alpha\overline{\alpha})
    + (\gamma-\alpha\overline{\alpha})\alpha\overline{\alpha},
  \end{equation*}
  where $(\gamma - \alpha\overline{\alpha} - \gamma^2)$
  and $(\gamma-\alpha\overline{\alpha})^2$ are positive self-adjoint
  operators and thus
  \begin{equation*}
    \tr (-h^2\Delta + E_b/2)\big((\gamma -
    \alpha\overline{\alpha} - \gamma^2)
    + (\gamma-\alpha\overline{\alpha})^2\big) \geq 0.
  \end{equation*}
  Adding and subtracting the term $$ \tr (-h^2 \Delta + E_b/2 )\gamma_{\lambda\psi}
      + \frac{1}{2} \int_{\mathbb{R}^6}
    V\big((x-y)/h\big) |\alpha_{\lambda\psi}(x,y)|^2 \ddd{3} x \ddd{3}y,$$
   we obtain
  \begin{align}\nonumber 
    & \tr (-h^2 \Delta +  E_b/2)\gamma  + \frac{1}{2} \int_{\mathbb{R}^6}
    V\big((x-y)/h\big)
    |\alpha(x,y)|^2 \ddd{3} x \ddd{3}y \\ \nonumber 
 &  \geq 
     \tr (-h^2 \Delta +E_b/2)\gamma_{\lambda\psi}
      + \frac{1}{2} \int_{\mathbb{R}^6}
    V\big((x-y)/h\big) |\alpha_{\lambda\psi}(x,y)|^2 \ddd{3} x \ddd{3}y\\ \nonumber 
    & \quad + 
      \tr \big((-h^2\Delta +E_b/2)\alpha\overline{\alpha}\big) +
      \frac{1}{2} \int_{\mathbb{R}^6} V\big((x-y)/h\big)
      |\alpha(x,y)|^2 \ddd{3} x \ddd{3}y\\ \nonumber
      &\quad- \tr \big((-h^2\Delta +E_b/2)\alpha_\lpsi
      \overline{\alpha_\lpsi}\big) - \frac{1}{2} \int_{\mathbb{R}^6}
      V\big((x-y)/h\big) |\alpha_\lpsi(x,y)|^2 \ddd{3} x \ddd{3}y \\ \nonumber
    &\quad + \tr \big [(-h^2 \Delta + E_b/2)
    \alpha\overline{\alpha}\alpha\overline{\alpha} \big]
    -\tr \big [(-h^2 \Delta + E_b/2)
      \alpha_\lpsi\overline{\alpha_\lpsi}\alpha_\lpsi\overline{\alpha_\lpsi}
    \big]\\
    &\quad + \tr \big [(-h^2 \Delta + E_b/2)\alpha\overline{\alpha}(\gamma-\alpha\overline{\alpha})
    + (\gamma-\alpha\overline{\alpha})\alpha\overline{\alpha}\big]
      -h^{1/2} \tr \big [(-h^2 \Delta + E_b/2)
      \alpha_\lpsi\overline{\alpha_\lpsi}\alpha_\lpsi\overline{\alpha_\lpsi}\big]. \label{rhss}
  \end{align}
The identity \eqref{calxi} immediately implies that  
  \begin{multline}
    \tr \big((-h^2\Delta +E_b/2)\alpha\overline{\alpha}\big) +
      \frac{1}{2} \int_{\mathbb{R}^6} V\big((x-y)/h\big)
      |\alpha(x,y)|^2 \ddd{3} x \ddd{3}y\\
      \geq  \tr \big((-h^2\Delta +E_b/2)\alpha_\psi
      \overline{\alpha_\psi}\big) + \frac{1}{2} \int_{\mathbb{R}^6}
      V\big((x-y)/h\big) |\alpha_\psi(x,y)|^2 \ddd{3} x \ddd{3}y = \frac h 4 \|\nabla\psi\|_2^2 \,,
  \end{multline}
and hence the sum of the  second and third lines on the right side of \eqref{rhss} is bounded from below by $\frac h 4 (1-\lambda^2) \|\nabla\psi\|_2^2 = O(h^3)$.   
  Hence the proof of \eqref{eq:lower_bound:H} reduces to establishing the  estimates
  \begin{align}
    \label{eq:diff:alpha_4}
    \left|\tr (-h^2 \Delta +
    E_b/2)\big[\alpha\overline{\alpha}\alpha\overline{\alpha}
    -
    \alpha_\lpsi\overline{\alpha_\lpsi}\alpha_\lpsi\overline{\alpha_\lpsi}\big] \right|
    &\lesssim  h^{2}, \\
    \label{eq:lower_bound:H:R1}
   | \tr\big((-h^2 \Delta +
    E_b/2)\big[\alpha\overline{\alpha}(\gamma-\alpha\overline{\alpha})
    + (\gamma-\alpha\overline{\alpha})\alpha\overline{\alpha}\big]\big)|
    & \lesssim h^{3/2},\\
    \label{eq:lower_bound:H:R2}
    h^{1/2}\tr\left((-h^2 \Delta + E_b/2)
      \alpha_\lpsi\overline{\alpha_\lpsi}\alpha_\lpsi\overline{\alpha_\lpsi}\right)
    & \lesssim h^{3/2},
  \end{align}
  which we are going to show in the following.
 
 Inequality~\eqref{eq:lower_bound:H:R2}
 is an immediate consequence of \eqref{eq:Delta_alpha_psi_4}. 
It also implies that it is enough to show  \eqref{eq:diff:alpha_4} for $\lambda =1$. To do this, we rewrite
\begin{equation}
 \alpha\overline{\alpha}\alpha\overline{\alpha}
      -\alpha_\psi\overline{\alpha_\psi}\alpha_\psi\overline{\alpha_\psi}
      =
      \alpha_\psi\overline{\alpha}\alpha\overline{\xi}
      +\xi\overline{\alpha}\alpha\overline{\alpha_\psi}
      + \xi\overline{\alpha}\alpha\overline{\xi}
      +\alpha_\psi(\overline{\alpha}\alpha-\overline{\alpha_\psi}\alpha_\psi)\overline{\alpha_\psi}\,.
\end{equation}
With $H := -h^2\Delta + \frac{E_b}{2}$ we obtain with the aid of 
H\"older's inequality for traces 
  \begin{equation}
    \label{eq:alpha_4:E}
    \begin{split}
      |\tr\big( H\big [
      \alpha\overline{\alpha}\alpha\overline{\alpha}
      -\alpha_\psi\overline{\alpha_\psi}\alpha_\psi\overline{\alpha_\psi}
      \big]\big)|
      &=
      \left|\tr\big(H^{1/2} \big[
      \alpha_\psi\overline{\alpha}\alpha\overline{\xi}
      +\xi\overline{\alpha}\alpha\overline{\alpha_\psi}
      +\xi\overline{\alpha}\alpha\overline{\xi}
      +\alpha_\psi(
      \overline{\alpha}\alpha-\overline{\alpha_\psi}\alpha_\psi)
      \overline{\alpha_\psi}
      \big]H^{1/2} \big) \right|\\
      &\leq
      2\|H^{1/2} \alpha_\psi\|_6 \|\alpha\|_6^2 \|H^{1/2}\xi\|_2
      +\|\alpha\|_\infty^2\|H^{1/2}\xi\|_2^2\\
      &\qquad+ \|H^{1/2} \alpha_\psi\|_6^2
      \|\overline{\alpha}\alpha -\overline{\alpha_\psi}\alpha_\psi\|_{3/2}.
    \end{split}
  \end{equation}
  Note that for any operator $T$, we have
  \begin{align*}
    \|H^{1/2} T\|_{2n} &= \|T^*HT\|_n^{1/2}
    \leq \sqrt{\|T^*(-h^2 \Delta)T\|_n + \tfrac 12 E_b \|T^*T\|_n}
    \leq h \|\nabla T\|_{2n} + \sqrt{\frac{E_b}{2}} \| T\|_{2n}\\
    &\leq h \left(\frac{1}{2}\|\nabla_X T\|_{2n} + \|\nabla_r
    T\|_{2n}\right) + \sqrt{\frac{E_b}{2}} \| T\|_{2n}\,,
  \end{align*}
  where in the last  line the operators $\nabla_X T$ and $\nabla_r T$ are defined via the kernels $(\nabla_X T)(x,y)$ and $(\nabla_r T)(x,y)$, respectively. 
  By applying this to the terms in  \eqref{eq:alpha_4:E} we obtain
  \begin{align}
    \label{eq:H_alpha_6}
    \|H^{1/2} \alpha_\psi\|_6
    &\leq \frac{h}{2} \|\nabla_X \alpha_\psi\|_6
      +h \|\nabla_r \alpha_\psi\|_6
      +\sqrt{\frac{E_b}{2}} \| \alpha_\psi\|_6 \lesssim h^{1/2},\\
    \label{eq:H_xi_2}
    \|H^{1/2} \xi\|_2 &\leq \frac{h}{2} \|\nabla_X \xi\|_2
      +h \|\nabla_r \xi\|_2
      +\sqrt{\frac{E_b}{2}} \| \xi\|_2 \lesssim h^{1/2},
  \end{align}
  where we used
  $\|\nabla_X \alpha_\psi\|_6 = \|\alpha_{\nabla \psi}\|_6 \leq
  \|\alpha_{\nabla \psi}\|_2 = \|\nabla \psi\|_2h^{-1/2}$, together with
  \eqref{eq:alpha_6}, \eqref{eq:nabla_alpha_6},  \eqref{eq:apriori:xi},
  \eqref{eq:apriori:nabla_xi} and \eqref{eq:apriori:nabla_r_xi}.
  The term
  $\|\overline{\alpha}\alpha-\overline{\alpha_\psi}\alpha_\psi\|_{3/2}$
  in \eqref{eq:alpha_4:E} can be bounded by
  \begin{equation}
    \label{eq:d_alpha_2}
    \|\overline{\alpha}\alpha-\overline{\alpha_\psi}\alpha_\psi\|_{3/2}
    =\|\overline{\alpha_\psi}\xi + \overline{\xi}\alpha_\psi +
    \overline{\xi}\xi \|_{3/2}
    \leq
    2 \|\alpha_\psi\|_6\|\xi\|_2 + \|\xi\|_6\|\xi\|_2
    \leq
    2 \|\alpha_\psi\|_6\|\xi\|_2 + \|\xi\|_2^2
    \lesssim h,
  \end{equation}
 where we used \eqref{eq:alpha_6} and \eqref{eq:apriori:xi}.
By combining \eqref{eq:alpha_4:E} with \eqref{eq:H_alpha_6}--\eqref{eq:d_alpha_2} we obtain 
  \eqref{eq:diff:alpha_4}.

  To show \eqref{eq:lower_bound:H:R1}, we can bound 
  \begin{align*}
  \left|   \tr\left(H  [ (\gamma - \alpha\overline{\alpha})
    \alpha\overline{\alpha} +  \alpha\overline{\alpha}  (\gamma - \alpha\overline{\alpha}) ]  \right) \right| 
    &= 2 \left| \Re \tr\left(H^{1/2} (\gamma - \alpha\overline{\alpha})H^{1/2}
      \frac{1}{H^{1/2}}\alpha\overline{\alpha}H^{1/2}\right) \right| \\
    &\leq  2 \tr   H(\gamma - \alpha\overline{\alpha})
      \left\| \frac{1}{H^{1/2}} \alpha\overline{\alpha}H^{1/2}\right\|_\infty.
  \end{align*}
The first factor on the right side is bounded by $O(h)$ according to \eqref{eq:apriori:Delta_gamma_2}.   Moreover, 
  \begin{equation*}
    \left\| \frac{1}{H^{1/2}}
      \alpha\overline{\alpha}H^{1/2}\right\|_\infty
    \leq \sqrt\frac 2{E_b} \| \alpha\overline{\alpha} H \alpha\overline{\alpha}\|_\infty^{1/2} \leq \sqrt\frac 2{E_b} \left( \tr H \alpha\overline{\alpha}\alpha\overline{\alpha} \right)^{1/2} \,,
  \end{equation*}
 which is bounded by $O(h^{1/2})$ using \eqref{eq:diff:alpha_4} together with
  \eqref{eq:Delta_alpha_psi_4}. This proves 
   \eqref{eq:lower_bound:H:R1}.

  \subsection{External potential (Proof of \eqref{eq:lower_bound:W})}
  \label{sec:lower_bound:W}
 Since $W$ is bounded, $\tr\gamma_\psi = O(h^{-1})$ and $\lambda = 1 + O(h^2)$, it clearly suffices to consider the case $\lambda = 1$.  Using the form \eqref{eq:Gamma_trial}  of $\gamma_\psi$ we evaluate
  \begin{align}\nonumber 
    h^2 \tr W(\gamma - \gamma_\psi) & =
    h^2 \tr W(\gamma-\alpha\overline{\alpha}) + h^2 \tr W(\alpha\overline{\alpha} - \alpha_\psi\overline{\alpha_\psi}) - (1 + h^{1/2} ) h^2 \tr W(\alpha_\psi\overline{\alpha_\psi} \alpha_\psi\overline{\alpha_\psi}) \\
  & \nonumber  \geq - h^2 \|W\|_\infty \left[ \tr(
      \gamma-\alpha\overline{\alpha}) + \|\xi\|_2^2 + 2 \| \alpha_\psi
      \overline{\xi} \|_1 + (1+ h^{1/2}) \tr \alpha_\psi\overline{\alpha_\psi}
      \alpha_\psi\overline{\alpha_\psi})\right]  \\ & \geq - O(h^{2}),
  \end{align}
  where we used \eqref{eq:apriori:gamma_2}, the decomposition $\alpha
  = \alpha_\psi + \xi$, and $\| \alpha_\psi \xi \|_1 \leq
  \|\alpha_\psi\|_2 \|\xi\|_2 \lesssim 1$.

    \subsection{Direct and exchange term (Proof of
    \eqref{eq:lower_bound:ex} and \eqref{eq:lower_bound:dir})}
  \label{sec:lower_bound:ex}

  Our strategy is as follows.
  As  a first step we reduce the direct term and exchange term
  to  corresponding expressions involving $\alpha$ only, and show that 
   \begin{align}
    \label{eq:lower_bound:ex:1}
    \left|\int_{\mathbb{R}^6}|\gamma(x,y)|^2
    V\big((x-y)/h\big)\ddd{3}x \ddd{3}y
    -\int_{\mathbb{R}^6}|(\alpha\overline{\alpha})(x,y)|^2
    V\big((x-y)/h\big)\ddd{3}x \ddd{3}y\right| & \lesssim h^{2}, 
    \end{align}
    \begin{align}
    \label{eq:lower_bound:dir:1}
    \left|\int_{\mathbb{R}^6} \gamma(x,x)\gamma(y,y)
    V\big((x-y)/h\big)\ddd{3}x \ddd{3}y
    - \int_{\mathbb{R}^6}
    (\alpha\overline{\alpha})(x,x)(\alpha\overline{\alpha})(y,y)
    V\big((x-y)/h\big)\ddd{3}x \ddd{3}y\right| & \lesssim h^{2}.
  \end{align}
  As a second step, we show that up to an error $O(h^{3/2})$ we are able to replace $\alpha$  by $\alpha_\psi$
  in the corresponding expressions, i.e.,
  \begin{align}
    \label{eq:lower_bound:ex:2}
    \left|\int_{\mathbb{R}^6}|(\alpha\overline{\alpha})(x,y)|^2
    V\big((x-y)/h\big)\ddd{3}x \ddd{3}y
    -\int_{\mathbb{R}^6}|(\alpha_\psi\overline{\alpha_\psi})(x,y)|^2
    V\big((x-y)/h\big)\ddd{3}x \ddd{3}y\right| & \lesssim h^{3/2}
    \end{align}
    and
   \begin{multline} \biggl|\int_{\mathbb{R}^6} (\alpha\overline{\alpha})(x,x)(\alpha\overline{\alpha})(y,y)
    V\big((x-y)/h\big)\ddd{3}x \ddd{3}y\\
    \label{eq:lower_bound:dir:2}
    - \int_{\mathbb{R}^6}
    (\alpha_\psi\overline{\alpha_\psi})(x,x)(\alpha_\psi\overline{\alpha_\psi})(y,y)
    V\big((x-y)/h\big)\ddd{3}x \ddd{3}y\biggr|  \lesssim h^{3/2}.
\end{multline}
These two steps together, in combination with $\lambda=1 + O(h^2)$, lead to  \eqref{eq:lower_bound:ex} and
  \eqref{eq:lower_bound:dir}, respectively.

  The estimates \eqref{eq:lower_bound:ex:1} and
  \eqref{eq:lower_bound:dir:1} can be obtained by applying
  Lemma~\ref{lemma:diff:dir_ex} with $\sigma = \alpha\overline{\alpha}$ and $\delta = \gamma
  -\alpha\overline{\alpha}$. 
  As a result we obtain that the left  sides of both \eqref{eq:lower_bound:ex:1} and
  \eqref{eq:lower_bound:dir:1} are bounded by
  \begin{subequations}
    \begin{align}
      \nonumber &2\int_{\mathbb{R}^6} \left|V\big((x-y)/h\big) \right| 
        (\gamma-\alpha\overline{\alpha})(x,x) \gamma(y,y)
      \ddd{3}x \ddd{3}y \\
      \label{eq:A:dir_ex:1}
      &\qquad = 2\int_{\mathbb{R}^6} \left|V\big((x-y)/h\big)\right| 
        (\gamma-\alpha\overline{\alpha})(x,x)
        (\gamma-\alpha\overline{\alpha})(y,y)
      \ddd{3}x \ddd{3}y \\
      &\qquad\phantom{\leq}
      \label{eq:A:dir_ex:2}
      + 2\int_{\mathbb{R}^6} \left|V\big((x-y)/h\big)\right| 
        (\gamma-\alpha\overline{\alpha})(x,x)
        (\alpha\overline{\alpha})(y,y) \ddd{3}x \ddd{3}y.
    \end{align}
  \end{subequations}
  By \eqref{eq:apriori:gamma_2}, the term \eqref{eq:A:dir_ex:1} is
  bounded by
  \begin{equation*}
   \int_{\mathbb{R}^6} (\gamma- \alpha\overline{\alpha})(x,x)(\gamma-\alpha\overline{\alpha})(y,y)
      \left| V\big((x-y)/h\big)\right|\ddd{3}x \ddd{3}y
    \leq [\tr(\gamma- \alpha\overline{\alpha})]^2\|V\|_\infty \lesssim h^2.
  \end{equation*}
  For \eqref{eq:A:dir_ex:2}, we are going to use the decomposition
  $\alpha = \alpha_\psi +\xi$ in the form 
  $$ \alpha \bar\alpha = \alpha_\psi\overline{\alpha_\psi}+
  \xi\overline{\alpha_\psi} +\alpha_\psi\overline{\xi} +
  \xi\overline{\xi}$$ and we thus have to bound four terms.  First, 
  observe that
  \begin{align*}
   \int_{\mathbb{R}^6}
      (\gamma-\alpha\overline{\alpha})(x,x)(\xi\overline{\xi})(y,y)
      \left|  V\big((x-y)/h\big)\right|\ddd{3}x \ddd{3}y \leq \|V\|_\infty
    \tr(\gamma-\alpha\overline{\alpha}) \tr(\xi\overline{\xi}) \lesssim
    h^2.
  \end{align*}
  Second,  using \eqref{eq:sup_alpha2}, 
  \begin{multline}
    \int_{\mathbb{R}^6}
      (\gamma-\alpha\overline{\alpha})(x,x)(\alpha_\psi\overline{\alpha_\psi})(y,y)
    \left|  V\big((x-y)/h\big)\right|\ddd{3}x \ddd{3}y\\
    \leq h^3\tr(\gamma-\alpha\overline{\alpha})
    \|(\alpha_\psi\overline{\alpha_\psi})(\cdot,\cdot)\|_\infty
    \|V\|_1 \lesssim h^2\,.
  \end{multline}
  For the remaining terms we use
  \eqref{twonorm} with $\sigma = \xi$ and the Cauchy-Schwarz inequality to obtain
  \begin{align*}
    &\int_{\mathbb{R}^6}
      (\gamma-\alpha\overline{\alpha})(x,x) \left| (\alpha_\psi\overline{\xi})(y,y)
      V\big((x-y)/h\big)\right|\ddd{3}x \ddd{3}y\\
    &\qquad\lesssim h^{-1}
    \int_{\mathbb{R}^9} (\gamma-\alpha\overline{\alpha})(x,x) \|
      \xi (\cdot, y)\|_2
     \left| V\big((x-y)/h\big)\right|\ddd{3}x \ddd{3}y\\
    &\qquad\lesssim h^{-1} \|\xi\|_2 \|V(\cdot
    /h)\|_2\tr(\gamma-\alpha\overline{\alpha})  \lesssim h^2.
  \end{align*}

  We now turn  to the estimates \eqref{eq:lower_bound:ex:2} and
  \eqref{eq:lower_bound:dir:2}.
 The difference of  the exchange terms in  \eqref{eq:lower_bound:ex:2} is bounded 
  by
  \begin{equation*}
     \|V\|_\infty \| \alpha\overline{\alpha} -   \alpha_\psi\overline{\alpha_\psi} \|_2  \| \alpha\overline{\alpha}  +   \alpha_\psi\overline{\alpha_\psi} \|_2  \,.
  \end{equation*}
  The $2$-norm of $\alpha\overline{\alpha} -   \alpha_\psi\overline{\alpha_\psi}$ can be bounded by the $3/2$-norm, which in turn is bounded by $O(h)$ according to \eqref{eq:d_alpha_2}. Moreover, $\|  \alpha_\psi\overline{\alpha_\psi} \|_2 = \|\alpha_\psi\|_4^2  \lesssim h^{1/2}$ by \eqref{eq:alpha_n}, proving \eqref{eq:lower_bound:ex:2}.

  For the direct term we insert the decomposition $\alpha = \alpha_\psi + \xi$ into the
  difference in \eqref{eq:lower_bound:dir:2}, yielding $15$
  terms. However, due to symmetry, it suffices to estimate the
  following $5$ terms
  \begin{subequations}
    \begin{align}
      \label{eq:lower_bound:dir:2:xxxx}
      &\int_{\mathbb{R}^6}
        (\xi\overline{\xi})(x,x)(\xi\overline{\xi})(y,y)
        \left|V\big((x-y)/h\big)\right|\ddd{3}x \ddd{3}y,\\
      \label{eq:lower_bound:dir:2:xxaa}
      &\int_{\mathbb{R}^6}
        (\xi\overline{\xi})(x,x)(\alpha_\psi\overline{\alpha_\psi})(y,y)
        \left|V\big((x-y)/h\big)\right|\ddd{3}x \ddd{3}y, \\
      \label{eq:lower_bound:dir:2:xaaa}
      &\int_{\mathbb{R}^6}
        (\xi\overline{\alpha_\psi})(x,x)(\alpha_\psi\overline{\alpha_\psi})(y,y)
        \left|V\left( (x-y)/h\right)\right|\ddd{3}x \ddd{3}y,\\
      \label{eq:lower_bound:dir:2:xxxa}
      &\int_{\mathbb{R}^6}
        (\xi\xi)(x,x)(\xi\overline{\alpha_\psi})(y,y)
        \left|V\left( (x-y)/h\right)\right|\ddd{3}x \ddd{3}y,\\
      \label{eq:lower_bound:dir:2:xaxa}
      &\int_{\mathbb{R}^6}
        (\xi\overline{\alpha_\psi})(x,x)(\xi\overline{\alpha_\psi})(y,y)
        \left|V\left( (x-y)/h\right)\right|\ddd{3}x \ddd{3}y.
    \end{align}
  \end{subequations}
  We begin with  \eqref{eq:lower_bound:dir:2:xxxx}. Obviously
  \begin{equation*}
    \int_{\mathbb{R}^6}
    (\xi\overline{\xi})(x,x)(\xi\overline{\xi})(y,y)
    \left|V\big((x-y)/h\big)\right|\ddd{3}x \ddd{3}y \leq   \|V\|_\infty \big[\tr (\xi\overline{\xi})\big]^2 \lesssim h^2.
  \end{equation*}
  For \eqref{eq:lower_bound:dir:2:xxaa} we obtain with the help of \eqref{eq:sup_alpha2}
  \begin{equation*}
    \int_{\mathbb{R}^6}
    (\xi\overline{\xi})(x,x)(\alpha_\psi\overline{\alpha_\psi})(y,y)
    \left|V\big((x-y)/h\big)\right|\ddd{3}x \ddd{3}y \leq
    \tr(\xi\overline{\xi})
    \|(\alpha_\psi\overline{\alpha_\psi})(\cdot,\cdot)\|_\infty h^3
    \|V\|_1  \lesssim h^{2}.
  \end{equation*}
  For the last three terms we invoke equation \eqref{twonorm} from
  Lemma~\ref{lemma:inequalpha} with $\sigma = \xi$ and the Cauchy-Schwarz inequality. 
  For  \eqref{eq:lower_bound:dir:2:xaaa} this gives
  \begin{align*}
    &\int_{\mathbb{R}^6}
    (\xi\overline{\alpha_\psi})(x,x)(\alpha_\psi\overline{\alpha_\psi})(y,y)
    \left|V\left( (x-y)/h\right)\right|\ddd{3}x \ddd{3}y \\
    &\qquad\lesssim h^{-1} \int_{\mathbb{R}^6}
    \|\xi(x,\cdot)\|_2 (\alpha_\psi\overline{\alpha_\psi})(y,y)
    \left|V\left( (x-y)/h\right)\right|\ddd{3}x \ddd{3}y  \\
    &\qquad\lesssim h^{-1} \|V(\cdot /h)\|_1 \|
    \xi\|_2 \|\alpha_\psi \overline{\alpha_\psi}(\cdot, \cdot)\|_2 \,.
  \end{align*}
  The desired bound $O(h^{3/2})$ then follows from the fact that the last factor $\|\alpha_\psi \overline{\alpha_\psi}(\cdot, \cdot)\|_2$
  on the right  side is of order $O(h^{-1})$. To see this, we write 
  \begin{align*}
    \|\alpha_\psi \overline{\alpha_\psi}(\cdot, \cdot)\|_2^2 &=
    \int_{\mathbb{R}^9} |\alpha_\psi(x,y)|^2 |\alpha_\psi(x,z)|^2
    \ddd{3}x\ddd{3}y\ddd{3}z\\
    &= h^{-8}\int_{\mathbb{R}^9} |\alpha_0\big((x-y)/h\big)|^2
    |\alpha_0\big((x-z)/h\big)|^2|\psi\big((x+y)/2\big)|^2|\psi\big((x+z)/2\big)|^2
    \ddd{3}x\ddd{3}y\ddd{3}z.
  \end{align*}
  Changing to the variables $r = x-y$, $s = x-z$ and $x$ and using
  Cauchy-Schwarz in $x$, we indeed obtain
  \begin{align*}
    \|\alpha_\psi \overline{\alpha_\psi}(\cdot, \cdot)\|_2^2 =
    h^{-8}\int_{\mathbb{R}^9} |\alpha_0\big(r/h\big)|^2
    |\alpha_0\big(s/h\big)|^2|\psi(x-r/2)|^2|\psi(x-s/2)|^2
    \ddd{3}x\ddd{3}r\ddd{3}s \leq h^{-2} \|\alpha_0\|_2^2
    \|\psi\|_4^4.
  \end{align*}
For   \eqref{eq:lower_bound:dir:2:xxxa} we get 
  \begin{align*}
    &\int_{\mathbb{R}^6}
    (\xi\overline{\xi})(x,x)(\xi\overline{\alpha_\psi})(y,y)
    \left|V\big((x-y)/h\big)\right|\ddd{3}x \ddd{3}y \\
    &\qquad\lesssim h^{-1}  \int_{\mathbb{R}^6}
    (\xi\overline{\xi})(x,x) \|\xi(y,\cdot)\|_2 \left|V\left(
        (x-y)/h\right)\right|\ddd{3}x \ddd{3}y
    \\
    &\qquad\lesssim h^{-1}  \|V(\cdot /h)\|_2
    \|\xi\|_2^3   \lesssim h^2
  \end{align*}
  and for \eqref{eq:lower_bound:dir:2:xaxa}
  \begin{align*}
    &\int_{\mathbb{R}^6}
    (\xi\overline{\alpha_\psi})(x,x)(\xi\overline{\alpha_\psi})(y,y)
    \left|V\big((x-y)/h\big)\right|\ddd{3}x \ddd{3}y \\
    &\qquad\lesssim h^{-2} 
    \int_{\mathbb{R}^6} \|\xi(x,\cdot)\|_2 \|\xi(y,\cdot)\|_2
    \left|V\left( (x-y)/h\right)\right|\ddd{3}x \ddd{3}y  \\
    &\qquad\lesssim h^{-2}  \|\xi\|^2_2
    \|V(\cdot/h)\|_1  \lesssim h^2.
  \end{align*}
  This concludes the proof of
  \eqref{eq:lower_bound:ex} and \eqref{eq:lower_bound:dir}.

  \appendix

  \section{Appendix: Proof of Lemma~\ref{lemma:inequalpha}}
  \label{sec:alpha_4}

\begin{proof}[Proof of Lemma~\ref{lemma:inequalpha}, Part II]
   We first prove \eqref{eq:alpha_n} and \eqref{eq:nabla_alpha_n}.  For $n \in 2 \mathbb{N}$, we can write 
  \begin{equation}
    \label{eq:tr_alpha_n}
    \tr\big((\alpha_\psi\overline{\alpha_\psi})^{n/2}\big)
    =
    \int_{\mathbb{R}^{3n}}
    \alpha_\psi(x_1,x_2)\overline{\alpha_\psi(x_2,x_3)}\cdots\alpha_\psi(x_{n-1},x_{n})\overline{\alpha_\psi(x_{n},x_1)}
    \ddd{3} x_1\cdots \ddd{3} x_n.
  \end{equation}
  We switch to the following coordinates
  \begin{equation}
    \label{eq:coordinates}
    \begin{split}
      X &= \frac{1}{n}\sum_{k=1}^n x_k \\
      r_k &= x_{k+1}-x_k,\qquad k = 1,\dotsc,n-1.
    \end{split}
  \end{equation}
  It is easy to see that the corresponding Jacobi determinant is
  equal to $1$.  Moreover, we can recover the original coordinates via
  \begin{align*}
    x_1 &= X - \frac{1}{n} \sum_{i=1}^{n-1}(n-i)r_i, \\
    x_{k+1} &= x_k + r_k,
  \end{align*}
  i.e.
  \begin{align*}
    x_k = X + s_k(r_1,\dotsc,r_{n-1})
  \end{align*}
  for some  linear functions $s_k$.  We therefore obtain for the integral in
  \eqref{eq:tr_alpha_n}
  \begin{align*}
    \|\alpha_\psi\|_n^n &= h^{-2n}\int_{\mathbb{R}^{3n}}
    \psi\bigl(X+ \tfrac 12( s_1+s_2) \bigr) \cdots
    \overline{\psi\bigl(X+ \tfrac 12(s_n + s_1)\bigr)} \\
    &\qquad\qquad\qquad \times \alpha_0(r_1/h)\cdots \overline{\alpha_0(r_{n}/h)}
    \ddd{3} X \ddd{3} r_1 \cdots\ddd{3} r_{n-1},
  \end{align*}
  where we introduced $r_n := -\sum_{k=1}^{n-1} r_k$.  H\"older's inequality in the $X$ variable then yields 
  \begin{align*}
    \| \alpha_\psi \|_n^n &\leq h^{-2n} \|\psi\|_n^n
    \int_{\mathbb{R}^{3(n-1)}} \left|\alpha_0(r_1/h)\cdots
      \overline{\alpha_0(r_{n}/h)}\right| \ddd{3} r_1 \cdots\ddd{3}
    r_{n-1}
    \\
    &= (2\pi)^{3(n-2)/2}h^{n-3}\|\psi\|_n^n\,
    \|\widehat{|\alpha_0|}\|_n^n\,,
  \end{align*}
  which is \eqref{eq:alpha_n}. 
  The same calculation with $\alpha_0$ replaced by
  $\nabla\alpha_0$ yields \eqref{eq:nabla_alpha_n}.

    Due to the symmetry $\alpha_\psi(x,y) = \alpha_\psi(y,x)$, we have
  \begin{align*}
    \tr
     \big( \Delta \alpha_\psi \overline{\alpha_\psi}\alpha_\psi\overline{\alpha_\psi}\big)
    &= \langle\alpha_\psi
    \overline{\alpha_\psi}\alpha_\psi, \Delta_x
    \alpha_\psi \rangle_{L^2(\mathbb{R}^6)} = \langle\alpha_\psi
    \overline{\alpha_\psi}\alpha_\psi, \tfrac 12 
    (\Delta_x+\Delta_y)\alpha_\psi \rangle_{L^2(\mathbb{R}^6)}\\
    &= \langle\alpha_\psi
    \overline{\alpha_\psi}\alpha_\psi, ( \tfrac 14 
    \Delta_X +\Delta_r) \alpha_\psi \rangle_{L^2(\mathbb{R}^6)} \,.
  \end{align*}
    Using the coordinates
  \eqref{eq:coordinates}, for which we have in the case of $n=4$
  \begin{align*}
    \frac{x_1+x_2}{2} &= X - s & \frac{x_3+x_4}{2} &= X + s &
    s(r_1, r_2, r_3) &= \frac{r_1 + 2r_2 + r_3}{4} \\
    \frac{x_2+x_3}{2} &= X - t & \frac{x_1+x_4}{2} &= X + t & t(r_1,
    r_2, r_3) &= \frac{r_3 - r_1}{4}
  \end{align*}
  and rescaling $r_k \to hr_k$, $k=1,2,3$, we
  can therefore write
  \begin{align*}
    & \tr(-h^2 \Delta +
    E_b/2)\alpha_\psi\overline{\alpha_\psi}\alpha_\psi\overline{\alpha_\psi}
    \\ & = h\int_{\mathbb{R}^{12}} \psi(X-hs) \overline{\psi(X-ht)}
    \psi(X+hs)
    \overline{\psi(X+ht)}\\
    &\qquad\qquad \times \bigl[(-\Delta+E_b/2)\alpha_0(r_1)\bigr]
    \overline{\alpha_0(r_2)}\alpha_0(r_3)
    \overline{\alpha_0(-r_1-r_2-r_3)}
    \ddd{3} X \ddd{3} r_1\ddd{3} r_2\ddd{3} r_3\\
    &\quad -\frac{h^2}{4} \langle\alpha_\psi
    \overline{\alpha_\psi}\alpha_\psi,  
    \Delta_X  \alpha_\psi \rangle_{L^2(\mathbb{R}^6)}\,.
    \end{align*}
  This term has the form
  \begin{align*}
    h (2\pi)^3 \|\psi\|_4^4 \int_{\mathbb{R}^3}
    |\widehat{\alpha_0}(p)|^4(p^2+E_b/2) \ddd{3}p + A_1\,h^{2} +
    A_2\,h^2,
  \end{align*}
  where
  \begin{align*}
    A_1 &= -\frac{1}{4}  \langle\alpha_\psi
    \overline{\alpha_\psi}\alpha_\psi,  
    \Delta_X  \alpha_\psi \rangle_{L^2(\mathbb{R}^6)}  \,,
    \\
    A_2 &= h^{-1}\int_{\mathbb{R}^{12}} \int_0^1 \frac{\dd}{\dd
      \tau}\left( \psi(X-\tau hs) \overline{\psi(X-\tau ht)}
      \psi(X+\tau hs) \overline{\psi(X+\tau ht)}\right)\dd{\tau} \\
    &\qquad\qquad \times \bigl[(-\Delta+E_b/2)\alpha_0(r_1)\bigr]
    \overline{\alpha_0(r_2)}\alpha_0(r_3)
    \overline{\alpha_0(-r_1-r_2-r_3)} \ddd{3} X \ddd{3} r_1\ddd{3}
    r_2\ddd{3} r_3.
  \end{align*}
  Using integration by parts, we can bound  $A_1$ as
  \begin{align*}
    |A_1| & = \frac{1}{4} \left| \left\langle \nabla_X \big(\alpha_\psi
    \overline{\alpha_\psi}\alpha_\psi\big) , \nabla_X \alpha_\psi \right \rangle  \right| \\& = \frac 14   \left| \left\langle \big( \nabla_X \alpha_\psi\big)
    \overline{\alpha_\psi}\alpha_\psi +  \alpha_\psi \big( \nabla_X
    \overline{\alpha_\psi}\big)\alpha_\psi   +   \alpha_\psi
    \overline{\alpha_\psi} \big( \nabla_X\alpha_\psi\big)  , \nabla_X \alpha_\psi \right \rangle  \right| \\ &    \leq  \frac 34 \|\nabla_X \alpha_\psi \|_2^2 \|\alpha_\psi\|_\infty^2  \lesssim \|\nabla\psi\|_2^4 \,,
  \end{align*}
  where the last inequality follows from $\|\alpha_\psi\|_\infty \leq \|\alpha_\psi\|_6$, which is $\lesssim h^{1/2} \|\psi\|_6$ as shown above. 
  
  To estimate $A_2$, we carry out the derivative in $\tau$ and subsequently use H\"older's inequality for the $X$ integration to obtain
    \begin{equation*}
    |A_2| \leq \|\nabla \psi\|_2 \|\psi\|_6^3
    \int_{\mathbb{R}^{9}}
    \big(|s|+|t|\big)\big|(V\alpha_0)(r_1)
    \alpha_0(r_2) \alpha_0(r_3)
    \alpha_0(-r_1-r_2-r_3)\big|
    \ddd{3} r_1\ddd{3} r_2\ddd{3} r_3.
  \end{equation*}
Here we have also used that $(-\Delta + E_b/2)\alpha_0 =
 - \tfrac 12 V \alpha_0$.
  We now note that $|s|+|t| \leq \frac 12 |r_1+r_2+r_3| + \frac 12 |r_2| + \frac 12 |r_3|$. We plug in this bound in the integrand and use Cauchy-Schwarz for the $r_2$ integration in the case of the terms 
  $|r_1+r_2+r_3|$ and $|r_2|$, and for the $r_3$ integration in the case
  of $|r_3|$. This yields 
  \begin{equation*}
    |A_2|\leq
    \frac 32\|\nabla \psi\|_2 \|\psi\|_6^3
    \|V\alpha_0\|_1\, \|\alpha_0\|_1\, \|\alpha_0\|_2\,
    \big\||\cdot|\alpha_0\big\|_2 \,.
  \end{equation*}
 The desired result then follows from the Sobolev inequality $\|\psi\|_6 \lesssim \|\nabla
  \psi\|_2$.
  This concludes the proof of Lemma~\ref{lemma:inequalpha}.
\end{proof}

\section*{Acknowledgments}

Partial financial support from the DFG grant GRK 1838, as well as the Austrian Science Fund (FWF), project Nr. P 27533-N27 (R.S.), is gratefully acknowledged.

\bibliographystyle{amsplain}

\end{document}